\newif\iffull
\newif\ifsubmission
\DeclareMathAlphabet{\mathsc}{OT1}{cmr}{m}{sc}
\newtheorem{theorem}{Theorem}[section]
\newtheorem{definition}[theorem]{Definition}
\newtheorem{lemma}[theorem]{Lemma}
\newcommand\sarah[1]{\textcolor{red}{Sarah: #1}}
\newcommand\pavel[1]{\textcolor{blue}{Pavel: #1}}
\newcommand\cindy[1]{\textcolor{magenta}{Cindy: #1}}
\newcommand\mariana[1]{\textcolor{green}{Mariana: #1}}
\newcommand\gary[1]{\textcolor{teal}{Gary: #1}}
\newcommand\al[1]{\textcolor{orange}{Al: #1}}
\renewcommand\pavel[1]{}
\renewcommand\sarah[1]{}
\renewcommand\cindy[1]{}
\renewcommand\mariana[1]{}
\renewcommand\gary[1]{}
\renewcommand\al[1]{}
\newcommand*\dash{\ifvmode\quitvmode\else\unskip\kern.16667em\fi---%
\hskip.16667em\relax}
\newenvironment{sarahlist}{
\begin{description}[itemsep=2pt,leftmargin=0.4cm]
}{\end{description}}
\newcommand{\phead}[1]{\noindent{\bf\normalsize #1.}}
\newenvironment{gamespec}{
\begin{center}
\begin{tabular}{l}
}{\end{tabular}\end{center}}
\newcommand{\trm}[1]{\textrm{#1}}
\newcommand{\N}{\mathbb{N}}
\newcommand{\A}{\mathcal{A}}
\newcommand{\B}{\mathcal{B}}
\newcommand{\pr}{\textrm{Pr}}
\newcommand{\secp}{\lambda}
\newcommand{\usecp}{1^\lambda}
\newcommand{\keygen}{\mathtt{KeyGen}}
\newcommand{\verify}{\mathtt{Verify}}
\newcommand{\sign}{\mathtt{Sign}}
\newcommand{\prove}{\mathtt{Prove}}
\newcommand{\old}{\mathsf{old}}
\newcommand{\new}{\mathsf{new}}
\newcommand{\pk}{\mathsf{pk}}
\newcommand{\sk}{\mathsf{sk}}
\newcommand{\randpick}{\mathrel{\xleftarrow{\$}}}
\newcommand{\adv}{\mathbf{Adv}}
\newcommand{\game}[3]{\mathsf{G}_{#1}^{#2}(#3)}
\newcommand{\main}{\mathsc{main}}
\newcommand{\oracle}{\mathcal{O}}
\newcommand{\advgame}[3]{\adv^{#1}_{#2}(#3)}
\newcommand{\splitview}[2]{\game{#1}{\mathsf{SVA}}{#2}}
\newcommand{\advsplitview}[2]{\adv^{\mathsf{SVA}}_{#1}(#2)}
\newcommand{\oscillation}[2]{\game{#1}{\mathsf{Osc}}{#2}}
\newcommand{\advoscillation}[2]{\adv^{\mathsf{Osc}}_{#1}(#2)}
\newcommand{\ks}{\mathsf{k\mbox{-}snd}}
\newcommand{\advmemb}[2]{\advgame{\mathsf{memb}}{#1}{#2}}
\newcommand{\advappend}[2]{\advgame{\mathsf{append}}{#1}{#2}}
\newcommand{\advlookup}[2]{\advgame{\mathsf{lookup}}{#1}{#2}}
\newcommand{\advks}[2]{\advgame{\mathsf{k\mbox{-}snd}}{#1}{#2}}
\newcommand{\advcr}[2]{\advgame{\mathsf{cr}}{#1}{#2}}
\newcommand{\clist}{\mathsf{list}}
\newcommand{\proveincl}{\mathtt{ProveIncl}}
\newcommand{\proveappend}{\mathtt{ProveAppend}}
\newcommand{\verincl}{\mathtt{VerIncl}}
\newcommand{\verappend}{\mathtt{VerAppend}}
\newcommand{\append}{\mathtt{Append}}
\newcommand{\vercommit}{\mathtt{VerCom}}
\newcommand{\lookup}{\mathtt{Lookup}}
\newcommand{\verlookup}{\mathtt{VerLookup}}
\newcommand{\histlookup}{\mathtt{Hist}}
\newcommand{\verhistlookup}{\mathtt{VerHist}}
\newcommand{\audit}{\mathtt{Audit}}
\newcommand{\veraudit}{\mathtt{VerAudit}}
\newcommand{\version}{\mathtt{Vrsn}}
\newcommand{\snap}{\mathsf{chkpt}}
\newcommand{\snaps}{\mathsf{chkpts}}
\newcommand{\entry}{\mathsf{entry}}
\newcommand{\entries}{\mathsf{entries}}
\newcommand{\mmap}{\mathsf{map}}
\newcommand{\mlog}{\mathsf{log}}
\newcommand{\mtable}{\mathsf{table}}
\newcommand{\mrl}{\mathsf{MRL}}
\newcommand{\croot}{\mathsf{root}}
\newcommand{\roothash}{h}
\newcommand{\maproot}{\roothash_\mmap}
\newcommand{\leaflog}{\mathsf{leaflog}}
\newcommand{\dict}{\mathsf{D}}
\newcommand{\key}{\mathsf{key}}
\newcommand{\val}{\mathsf{val}}
\newcommand{\hist}{\mathsf{hist}}
\newcommand{\histrepr}{\mathsf{hist\_rep}}
\newcommand{\btwn}{\mathsf{btwn}}
\newcommand{\clientcheck}{\mathtt{CheckEntry}}
\newcommand{\clientupdate}{\mathtt{UpdateChkpt}}
\newcommand{\ocheck}{\oracle_{\mathsf{check}}}
\newcommand{\oupdate}{\oracle_{\mathsf{update}}}
\newcommand{\range}{\mathsf{R}}
\newcommand{\compact}[1]{\mathsf{cmpct}#1}
\newcommand{\hgt}{\mathsf{hgt}}
\newcommand{\node}{\mathsf{v}}
\newcommand{\parent}{\mathsf{prnt}}
\newcommand{\coverage}{\mathsf{cover}}
\newcommand{\compactrangealg}{\textsc{CompactRange}\xspace}
\newcommand{\mergealg}{\textsc{Merge}\xspace}
\newcommand{\rangerootalg}{\textsc{RangeToRoot}\xspace}
\newcommand{\mth}{\textsc{MTH}}
\newcommand{\numsub}[1]{\mathsf{num}_{#1}}
\newcommand{\numwitnesses}{N_W}
\newcommand{\numreqsnap}{\gamma}
\newcommand{\mogname}{\textrm{Mog}\xspace}
\begin{document}

\title{Think Global, Act Local: Gossip and Client Audits in Verifiable Data Structures}

\ifsubmission{
\author{}
}\else{
\def\authspace{\hspace{10pt}}
\def\google{\raisebox{5pt}{\small $\diamond$}}
\def\unaffil{\raisebox{5pt}{\tiny $\triangle$}}
\author{Sarah Meiklejohn\google\thanks{Contact email: \url{meiklejohn@google.com}}, 
Pavel Kalinnikov\google, 
Cindy S. Lin\google, 
Martin Hutchinson\google, 
Gary Belvin\unaffil, 
Mariana Raykova\google, 
Al Cutter\google \\

\google Google LLC \authspace
\unaffil Unaffiliated \authspace \\
}
}\fi

\maketitle

\begin{abstract}

In recent years, there has been increasing recognition of the benefits of having services provide auditable logs of data, as demonstrated by the deployment of Certificate Transparency and the development of other transparency projects.  Most proposed systems, however, rely on a \emph{gossip} protocol by which users can be assured that they have the same view of the log, but the few gossip protocols that do exist today are not suited for near-term deployment.  
Furthermore, they assume the presence of global sets of auditors, who must be blindly trusted to correctly perform their roles, in order to achieve their stated transparency goals.
In this paper, we address both of these issues by proposing a gossip protocol and a \emph{verifiable registry}, \mogname, in which users can perform their own auditing themselves.  We prove the security of our protocols %
and demonstrate via experimental evaluations that they are performant in a variety of potential near-term deployments.  
\end{abstract}

\section{Introduction}

The introduction of end-to-end encrypted messaging by applications such as WhatsApp and Signal comes with the guarantee that conversations between users are kept private.  It is equally important, however, to guarantee that these conversations are being held with the intended recipient, meaning verifying that they agree on the same public keys to identify each other.  Messaging apps can fall victim to man-in-the-middle attacks otherwise, yet they currently prevent this with techniques such as manual inspection or QR code scanning that put the burden of verification on individual users~\cite{unicorn-chi}.

In recent years, various solutions have aimed to solve this problem by providing Key Transparency ~\cite{coniks,ethiks,SP:TomDev17,seemless}: the public keys used to identify users are stored with a server, and clients can perform \emph{lookups} to find the public keys for their contacts.  The set of keys belonging to a user may expand and change, however, reflecting their usage of multiple devices or replacement of existing ones.  To avoid having to trust the server to give out the right key at the right time, clients may perform periodic \emph{audits} of their own keys, which means looking at \emph{all} the public keys held for them by the server.  
Crucially, both of these operations should be performed in a \emph{verifiable} way, meaning it should be difficult for an untrusted server to give an invalid or incomplete response to a client.

Beyond KT, there are a growing number of projects with similar goals: Certificate Transparency (CT)~\cite{6962,6962bis,ESORICS:DGHS16,enhanced-ct,revocation-transparency,aki,arpki}, software transparency~\cite{CCS:FDPFSS14,chainiac,contour}, and more general transparency~\cite{SP:TomDev17,continusec}.  Looking further into the future, one could imagine bringing transparency to aspects of government; e.g., a land registry in which anyone could obtain verifiable results of a lookup (finding the current owner of a property) or an audit (finding the entire history of ownership).  
\mariana{Should we mention explicitly applications to PKI and then we can refer to the secure aggregation protocol https://eprint.iacr.org/2017/281.pdf (used in Federated Learning) which would obtain stronger (active) security assuming a PKI}

While many existing systems acknowledge that individual users must be responsible for monitoring their own entries (as, in KT for example, only they know if they have recently changed or added a key), they still assume that global \emph{auditors} act to verify the honest behavior of the server.  It is not possible to check if auditors are actually doing this, however, which means they must be trusted without the ability to verify.  If an application domain does not have a natural choice for such global auditors, or if a user is not willing to blindly trust a global auditor in this way, their only choice is to audit the entire data structure themselves, despite the fact that they might care only about their own entries.  This global audit is likely to be prohibitively expensive for an individual user.

More broadly, these systems all rely on the ability of users to be sure that they see the same data as others when accessing the server.  This is typically achieved via a \emph{gossip} protocol, which prevents servers from carrying out \emph{split-view attacks}.  Despite the development of various gossip protocols in recent years and the need for gossip to fulfill the promises of CT, there is ``next to no deployment in the wild'' of any gossip protocol~\cite{ct-pam}.  This is likely due to the fact that existing proposals rely on significant changes to the Internet infrastructure~\cite{ct-gossip,gossiping-in-ct,dahlberg-aggregation}, communication between many different participants~\cite{coniks}, or the usage of blockchains~\cite{SP:TomDev17,keybase-blockchain,contour}, all of which present obstacles in practical near-term deployments.

\vspace{2mm}
\phead{Our contributions}
In this paper, we present two distinct contributions.  First, in Section~\ref{sec:gossip} we propose a gossip protocol for verifiable logs that can be used to support a variety of different applications (CT, KT, etc.).  Our approach is most similar to the ``collective signing'' (CoSi) protocol due to Syta et al.~\cite{SP:STVWJG16}, in which a large set of \emph{witnesses} interact to produce a signature over some given message.  In our case, the message represents the latest \emph{checkpoint} of a verifiable log, which the witnesses should first ensure is valid (i.e., consistent with previous checkpoints).  Our goal is not to scale to a large number of witnesses; in fact, in scenarios like CT log servers, which are required to be organizationally independent,\footnote{\url{https://chromium.github.io/ct-policy/log_policy.html}} may even act as witnesses for each other~\cite{minimal-gossip}.  As compared to CoSi, this means we can eliminate the interaction between witnesses and have each one just sign the checkpoint, with the client then responsible for gathering these signatures and deciding whether or not some form of consensus has been reached. This has the added benefit that clients can have different policies around what constitutes consensus, allowing some clients to impose more stringent requirements or require signatures from a particular subset of witnesses.  We prove that this protocol prevents split-view attacks as long as some threshold number of witnesses are honest, and achieves liveness as long as witnesses satisfy some minimum uptime requirement.  

Next, in Section~\ref{sec:registry} we turn our attention to verifiable \emph{registries} of entries, which can consist of lists of public keys (in the case of KT), owners of a property (in the case of a land registry), or whatever is needed to support a given application.  We present \mogname, a verifiable registry that takes advantage of \emph{compact ranges} in Merkle trees, which we describe in Section~\ref{sec:compact} and may be of independent interest, to achieve efficiency.  To the best of our knowledge, \mogname is the first data structure to allow for \emph{personal} auditing, meaning a client can audit only its own part of the data structure in a way that is significantly more efficient than auditing the entire data structure.
\mogname thus achieves (provable) security without relying on any global auditors.

\section{Background and Definitions}\label{sec:defns}

\subsection{Preliminaries}\label{sec:notation}

For a finite set $S$, 
$|S|$ denotes its size and $x\randpick S$ denotes sampling a
member uniformly from $S$ and assigning it to $x$.  For an ordered list $\clist$ of objects, $\clist[i]$ denotes the $i$-th object; similarly, for an object $\mathsf{obj}$, $\mathsf{obj}[\mathsf{cmpnt}]$ denotes the subcomponent $\mathsf{cmpnt}$.  $\secp\in\N$
denotes the security parameter and $\usecp$ denotes its unary representation,
and $\varepsilon$ denotes the empty string.
PT stands for polynomial time.  By $y\gets A(x_1,\ldots,x_n)$ we denote running
algorithm $A$ on inputs $x_1,\ldots,x_n$ and assigning
its output to $y$, and by $y\randpick A(x_1,\ldots,x_n)$ we denote running
$A(x_1,\ldots,x_n;R)$ for a uniformly random tape $R$.
Adversaries are modeled as randomized algorithms.  %
We use code-based games in our security definitions~\cite{EC:BelRog06}.
A game $\game{\A}{\mathsf{sec}}{\secp}$, played with respect to a security
notion $\mathsf{sec}$ and adversary $\A$, has a $\main$ procedure whose
output is the output of the game.
$\pr[\game{\A}{\trm{sec}}{\secp}]$ denotes the probability that this
output is $1$.

\subsection{History trees}\label{sec:history-trees}

A Merkle tree~\cite{merkle} is a binary tree whose leaves represent some set of values.  More precisely, it is a set of nodes, where each node $\node$ is annotated with the following information: (1) a pointer $\node[\parent]$ to its parent; (2) a pointer $\node[\ell]$ to its left child; (3) a pointer $\node[r]$ to its right child; and (4) a hash label $\node[h]$.  We use $\hgt(\node)$ to denote the height of the subtree under $\node$ and use $N$ to denote the \emph{size} of the tree; i.e., the number of leaves it contains.  For the leaf at index $i$, the hash label is $H(\entry_i)$, where $\entry_i$ is a value and $H(\cdot)$ is a collision-resistant hash function.  For non-leaf nodes, the hash label is $H(\node[\ell][h] \| \node[r][h])$.  The \emph{root} $\croot$ of the tree is the node at height $\lceil\log(N)\rceil$.  Merkle trees enable efficient \emph{inclusion} proofs, which prove that an entry is in the tree, and \emph{consistency} proofs, which prove that a tree of size $N$ has been obtained from a tree of size $N-\ell$ only by appending entries to the tree (and in particular not deleting or modifying any of the existing entries).  These proofs are of size $O(\log N)$, and can be verified in time $O(\log N)$ against just the hash of the root(s).

A \emph{history} tree~\cite{usec:crowal09} (also recently called a Merkle Mountain Range~\cite{flyclient}) is a Merkle tree in which the left subtree is a perfect tree of size $2^i$ for $i = \lfloor\log(N-1)\rfloor$ and the right subtree is a history tree of size $N-2^i$.  
If the right subtree becomes ``full'' (meaning its size is also $2^i$), then it is incorporated into the left subtree (meaning $\croot$ becomes the root of the left subtree) and the right subtree is then empty and ready for new entries.  
This means the tree is not always balanced, but that append operations are efficient and the roots of perfect subtrees are ``frozen'' in place even as new values are appended.  
We exploit this property in our treatment of \emph{compact ranges} in Section~\ref{sec:compact}.

\subsection{Arguments of knowledge}\label{sec:snark}

A non-interactive argument of knowledge for a relation $R$ consists of two algorithms: $\pi\randpick\prove(R,x,w)$, which outputs a proof $\pi$ that $(x,w)\in R$; and $0/1\gets\verify(R,x,\pi)$, which outputs a bit indicating whether or not the proof $\pi$ is convincing that $x\in L_R$.  The argument is \emph{correct} if $\verify(R,x,\prove(R,x,w))=1$ for all $(x,w)\in R$ and satisfies \emph{knowledge soundness} if it is hard for a malicious prover to convince a verifier of an incorrect statement.  More formally, this says for all PT adversaries $\A$ and instances $x$ there exists a PT extractor $\chi$ such that the probability is negligible that $\A$ produces a proof $\pi$ such that (1) $\verify(R,x,\pi)=1$ but (2) $(x,w)\notin R$ for $w\gets\chi(x,\pi)$.  Additionally, the argument is \emph{succinct} (or is a \emph{SNARK}~\cite{Groth2010}) if it has a small constant proof size and verification runs in constant time.

\subsection{Gossip-based verifiable logs}\label{sec:gossip-defs}

When using a \emph{verifiable log}, a server maintains the log $\mlog$ and a client, who does not necessarily trust the server, maintains a succinct \emph{checkpoint} $\snap$ of the log.  
Each batch of updates to the log results in a new \emph{version}, which should be uniquely recoverable from a checkpoint using the function $\version(\snap)$.  
We consider five algorithms associated with a verifiable log.

\begin{sarahlist}

\item[$\snap,\mlog\randpick\append(\mlog,\entries)$] is used to append a new batch of entries to the log, which results in a new version and thus a new checkpoint.  We occasionally overload notation and write $\snap_\new\randpick\append(\snap,\entries)$ when we care only about how the checkpoint changes.

\item[$\pi\gets \proveincl(\mlog,\entry)$] is used to provide a proof $\pi$ of the inclusion of a given entry in the log.

\item[$0/1\gets\verincl(\snap,\entry,\pi)$] is used to check the proof $\pi$ against a checkpoint $\snap$. 

\item[$\pi\gets\proveappend(\mlog_\new,\snap,\snap_\new)$] is used to prove that a new checkpoint $\snap_\new$ has been obtained by only appending entries to the log associated with an old checkpoint $\snap$. 

\item[$0/1\gets\verappend(\snap,\snap_\new,\pi)$] is used to check the proof $\pi$ concerning the \emph{consistency} of $\snap_\new$ with $\snap$.

\end{sarahlist}

Additionally, we consider the following two interactive protocols run between a client in possession of a checkpoint $\snap$ and a server in possession of the full log $\mlog$.

\begin{sarahlist}
\item[$\clientupdate$] allows the client to update its locally stored checkpoint $\snap$.  

\item[$\clientcheck$] allows the client to check, against its current checkpoint, whether or not an entry is in the log.  This typically means the client provides $\entry$, the server produces $\pi\gets\proveincl(\mlog,\entry)$, and the client checks $\verincl(\snap,\entry,\pi)$.
\end{sarahlist}

\newcommand\formallogdefns{
We use the same definition of correctness as Tomescu et al.~\cite[Definition 3.1]{CCS:TBPPTD19}, which says that $\verincl(\snap,\entry,\pi)=1$ for $\pi\gets\proveincl(\mlog,\allowbreak\entry)$ and $\verappend(\snap,\allowbreak\snap_\new,\pi) =1$ for $\pi\gets\proveappend(\mlog_\new,\allowbreak\snap,\allowbreak\snap_\new)$ when $\snap,\mlog$ and $\snap_\new,\mlog_\new$ are produced by a series of $\append$ operations (with $\mlog_\new$ at a later version than $\mlog$) that include $\entry$.
In terms of efficiency, the goal of a verifiable log is to have the runtime of all verification algorithms be $O(\log N)$, and to have all proofs be of size $O(\log N)$ as well.  
In terms of security, we cannot use the definitions of Tomescu et al., which rely on the ability to prove non-inclusion (which cannot be done efficiently in a log), so instead follow Chase and Meiklejohn~\cite{CCS:ChaMei16} in formalizing the necessary security properties.  We first define an additional algorithm $0/1\gets\vercommit(\snap,\entries)$ that is used to check if a checkpoint $\snap$ commits to a list of entries $\entries$. (Unlike the other verification algorithms, we do not expect the client to run this; it is defined just for notational convenience.)

\begin{definition}\textbf{\emph{\cite{CCS:ChaMei16}}}\label{def:append-only-log}
Define $\advappend{\A}{\secp}$ as the probability of an adversary $\A$ outputting $(\snap_1,\snap_2,\entries,\pi)$ such that (1) $\vercommit(\snap_2,\entries)$, (2) $\verappend(\snap_1,\allowbreak\snap_2,\pi) = 1$, and (3) there does not exist a prefix of $\entries$ to which $\snap_1$ commits; i.e., there does not exist an index $j$ such that $\vercommit(\snap_1,\entries[1:j])$.  If for all PT adversaries $\advappend{\A}{\sec} < \nu(\secp)$ for some negligible function $\nu(\cdot)$, then the log satisfies \emph{append-only security}.
\end{definition}

\begin{definition}\textbf{\emph{\cite{CCS:ChaMei16}}}\label{def:membership-log}
Define $\advmemb{\A}{\secp}$ as the probability of an adversary $\A$ outputting $(\snap,\entry,\entries,\pi)$ such that (1) $\vercommit(\snap,\entries)$, (2) $\verincl(\snap,\entry,\pi) = 1$, but (3) $\entry\notin\entries$.  If for all PT adversaries $\advmemb{\A}{\sec} < \nu(\secp)$ for some negligible function $\nu(\cdot)$, then the log satisfies \emph{membership security}.
\end{definition}

\mariana{I think it will be nice to write the above two definitions in the style of the next definition with the security games, this is just about the style}

These definitions prevent the server from giving conflicting responses to a single client, but in a setting with multiple clients the server may give out different checkpoints to different clients.  We thus consider the potential for the server to carry out a split-view attack, in which these different checkpoints are in fact inconsistent; i.e., commit to logs with different contents.  We formalize this attack as follows.

\begin{definition}[Split-view attack]\label{def:split-view}
Define $\advsplitview{\A}{\secp}$ as the probability of an adversary $\A$ winning the following game:
\begin{gamespec}
\underline{$\main$ $\splitview{\A}{\secp}$}\\
$A,B\gets\emptyset$; $\snaps\gets\vec{\varepsilon}$\\
$\entries\randpick\A^{\oupdate,\ocheck}(\usecp)$\\
return $(\exists (\snap_0,\entry,\pi)\in A ~\land~\snap_1\in B ~|~$\\
\qquad\quad~~ $\version(\snap_1)\geq \version(\snap_0)~\land$ \\
\qquad\quad~~ $\vercommit(\snap_1,\entries)~\land~\entry\notin \entries)$\\
~\\
\underline{$\oupdate(i, \snap,\pi)$}\\
if $(\clientupdate(\snaps[i], \snap,\pi))$\\
\quad $\snaps[i]\gets\snap$ \\
\quad add $\snap$ to $B$ \\
~\\
\underline{$\ocheck(i,\entry,\pi)$}\\
if $(\clientcheck(\snaps[i],\entry,\pi))$\\
\quad add $(\snaps[i],\entry,\pi)$ to $A$\\
\end{gamespec}
If for all PT adversaries $\advsplitview{\A}{\secp} < \nu(\secp)$ for some negligible function $\nu(\cdot)$ then the protocol \emph{resists split-view attacks}.
\end{definition}

Intuitively, this game maintains two lists: $A$ to keep track of the checkpoints that have been used to verify inclusion, and $B$ to keep track of the checkpoints that have been accepted as valid.  The adversary wins if one client has accepted a checkpoint $\snap_0$ and used it to verify the inclusion of an entry $\entry$ in the log ($(\snap_0,\entry,\pi)\in A$), but another client has accepted a checkpoint $\snap_1$ ($\snap_1\in B$) that commits to a set of entries that does not contain $\entry$; i.e., the two checkpoints are inconsistent.

\mariana{Are we going to be proving that a verifiable log construction defined with the six algorithms above is resistant to split-view attacks - if so, it seems that we should include some information/algorithms about gossip in the definition of a verifiable log (or at least I am not immediately seeing how the gossip will be included just in a construction instantiating the above definition)?} \sarah{Gossip will be folded into the $\clientupdate$ interactive protocol, which is a part of the definition too - the definition of a verifiable log is the six algorithms plus the two interactive protocols.}

\gary{Should $\clientcheck$ accept parameters? The arguments seem to match VerIncl. Alternatively, CheckEntry seems to imply additional client-server communication. Should $\ocheck$ be changed to VerIncl?}
}

We formally define security for a verifiable log in Appendix~\ref{sec:safety-proof}, in terms of \emph{append-only} security (which says it should be hard to provide a valid consistency proof for two checkpoints that commit to lists of entries where one is not a prefix of the other), \emph{membership} security (which says it should be hard to provide a valid inclusion proof for an entry not in the log), and the resistance to \emph{split-view attacks} (which says a server must present the same view of the log to every client).
If $\clientupdate$ and $\clientcheck$ are run only between the client and the server, then the adversary can always carry out a split-view attack by maintaining a different version of the log for each client but perform all operations on those individual logs honestly.  In order to ensure that clients actually have consistent views of the log, it is thus crucial for them to be aware of the checkpoints seen by other clients, in the form of a gossip protocol.  We explore the extra interactions this entails in $\clientupdate$ in Section~\ref{sec:gossip}.

\subsection{Verifiable registries}

We now define transparency for a dictionary $\dict$ that maps a key $\key$ to a value $\val$.  In fact, we keep track of not only the current mapping $\key\mapsto\val$ but all previous such mappings as well, and want to ensure that not only is the mapping correct but also the history of values is append-only.  Following Chase et al.~\cite{seemless}, 
we refer to this as a \emph{verifiable registry}.  We consider three types of client queries: (1) simple lookups, in which clients get the latest value for a key; (2) history lookups, in which they get the history of a key; and (3) audits, in which they ensure that the history of a key is append-only, meaning its value is consistent across all versions of the registry.  

More precisely, $\dict$ maps $\key$ to $\hist$, where $\hist$ contains both the current value (as its last entry) and all previous ones.  We denote by $\histrepr$ the concise representation of $\hist$ that is stored by the client.
We define the following algorithms run by the server, which holds as state the latest registry $\dict$:

\begin{sarahlist}

\item[$\snap,\dict\gets\append(\dict,\{\key_i,\val_i\}_i)$] appends some representation of $\val_i$ to $\dict[\key_i]$ for all $i$; i.e., adds $\val_i$ to the history associated with $\key_i$.  This results in a new version of the registry and thus a new checkpoint.

\item[$\val,\pi\gets\lookup(\dict,\key)$] returns the latest value $\val$ in $\hist\gets\dict[\key]$, along with a proof $\pi$ that $\val$ is correct.

\item[$\hist,\pi\gets\histlookup(\dict,\key,\histrepr_\old)$] returns the history $\hist$ of $\dict[\key]$ since $\histrepr_\old$ %
and a proof $\pi$ that $\hist$ is consistent with $\histrepr_\old$.

\item[$\hist,\pi\gets\audit(\dict,\key,\histrepr_\old)$] returns the history $\hist$ of $\dict[\key]$ since $\histrepr_\old$ and a proof $\pi$ that this is the \emph{complete} history consistent with it; i.e., that no inconsistent histories exist in any version of the registry, meaning $\dict[\key]$ is append-only.

\end{sarahlist}

We then consider the following verification algorithms run by the client, who stores a checkpoint $\snap$ (and, optionally, the history associated with one or multiple keys).  

\begin{sarahlist}

\item[$0/1\gets\verlookup(\snap,\key,\val,\pi)$] verifies the value $\val$ for $\key$ against $\snap$ and the proof $\pi$.  %

\item[$0/1\gets\verhistlookup(\snap,\key,\histrepr_\old,\hist,\pi)$] verifies the history $\hist$ for $\key$ against the stored history $\histrepr_\old$ and the proof $\pi$.  

\item[$0/1\gets\veraudit(\snap,\key,\histrepr_\old,\hist,\pi)$] verifies the history $\hist$ for $\key$ against the proof $\pi$, with respect to all of the registry versions since $\histrepr_\old$.  
\gary{Also might be worth noting that $\hist_\new$ should be select-able by the client (or at least derived from $\hist_\old$)}

\end{sarahlist}

In addition to these algorithms, a verifiable registry also requires the $\clientupdate$ interaction defined in Section~\ref{sec:gossip-defs}, in order to allow clients to update their checkpoints. 

We formally define security for a verifiable registry in Appendix~\ref{sec:registry-proof}, in terms of \emph{lookup security} (which says that it should be hard for a server to provide valid proofs for two different lookup values for the same key) and resistance to \emph{oscillation attacks}.  In such an attack, an adversarial server interacts with two types of clients, one performing basic lookups on a key and one performing full audits on the same key, and tries to ``oscillate'' between versions of the registry that show a consistent history to the auditing client and a value that isn't in this history to the more basic client.  If oscillation attacks are not possible for \emph{any} key, then the registry is \emph{append-only}~\cite{CCS:TBPPTD19}.

\newcommand\formalregistrydefns{
We again follow Tomescu et al.\ in defining correctness: $\verlookup(\snap,\key,\lookup(\dict,\key))=1$ and $\verhistlookup(\allowbreak\snap,\allowbreak\key,\histrepr,\allowbreak\pi)=1$ for $\pi\gets\histlookup(\dict,\allowbreak\key,\allowbreak\histrepr)$ when $\dict,\snap$ are produced by a series of $\append$ operations including $(\key,\val)$ and containing the combined history of $\histrepr$ and the history returned by $\histlookup$.  Similarly, we require that $\veraudit(\snap,\key,\histrepr,\audit(\dict,\key,\histrepr))=1$.
In terms of efficiency, registries should be able to support efficient lookups, meaning proofs that are of size $O(\log N)$ and take $O(\log N)$ time to verify.  Unlike logs, however, supporting efficient append-only proofs (which is essentially what $\audit$ proves) is more challenging for registries, as observed by Tomescu et al.~\cite{CCS:TBPPTD19} and as we elaborate on in Section~\ref{sec:maps}.  We defer a detailed discussion of efficiency for $\audit$ until Section~\ref{sec:registry-efficiency}.
In terms of security, we first consider the basic notion of \emph{lookup security}.

\begin{definition}{\emph{\textbf{\cite{CCS:TBPPTD19}}}}\label{def:lookup}
Define $\advlookup{\A}{\secp}$ as the probability of an adversary $\A$ outputting two different values that verify for the same key and registry version; i.e., $(\snap,\key,\val,\val',\pi,\pi')$ such that (1) $\val\neq\val'$, (2) $\verlookup(\snap,\key,\val,\pi) =1$, and (3) $\verlookup(\snap,\key,\val',\pi') =1$.  If for all PT adversaries $\advlookup{\A}{\secp} < \nu(\secp)$ for some negligible function $\nu(\secp)$, then the registry satisfies \emph{lookup security}.
\end{definition}

We also consider the more involved notion of an \emph{oscillation attack}.  In such an attack, an adversarial server interacts with two types of clients, one performing basic lookups on a key and one performing full audits on the same key, and tries to ``oscillate'' between versions of the registry that show a consistent history to the auditing client and a value that isn't in this history to the more basic client.  

\begin{definition}[Oscillation attack]\label{def:oscillation}
Define $\advoscillation{\A}{\secp}$ as the probability of an adversary $\A$ winning the following game (omitting the description of $\oupdate$, which is the same as in Definition~\ref{def:split-view}):
{\small
\begin{gamespec}
\underline{$\main$ $\oscillation{\A}{\secp}$}\\
$(\{\snap_i\}_{i=0}^2,\key,\val,\hist_\old,\hist,\pi_1,\pi_2)\randpick\A^{\oupdate}(\usecp)$\\
return $(\verlookup(\snap_1,\key,\val,\pi_1)~\land$\\
\qquad\quad~$\veraudit(\snap_2,\key,(\snap_0,\hist_\old),\hist,\pi_2)\land$\\
\qquad\quad~$\version(\snap_2)\geq \version(\snap_1)\geq\version(\snap_0)~\land$ \\
\qquad\quad~$\snap_0,\snap_1,\snap_2\in B~\land$ \\
\qquad\quad~$\val\neq\max(\hist_\old)~\land~\val\notin\hist$
\end{gamespec}
}
If for all PT adversaries $\advoscillation{\A}{\secp} < \nu(\secp)$ for some negligible function $\nu(\cdot)$ then the protocol \emph{resists oscillation attacks}.
\end{definition}

\mariana{The above definition for oscillation attack resistance seems to be the same as append-only, is there a difference? Do we need to define formally a version of the definition which is relevant just for the security of a single client?}

If oscillation attacks are not possible for \emph{any} key, then the registry is \emph{append-only}~\cite{CCS:TBPPTD19}.  This per-key variant is well-suited to applications such as Key Transparency, however, where participants don't necessarily care about the entire registry but just want to ensure that their own entries remain valid.
}

\section{Compact Ranges}\label{sec:compact}

In this section, we present the notion of a \emph{compact range}, which is a succinct representation of a range of leaves in a history tree (Section~\ref{sec:history-trees}).  Both inclusion proofs and consistency proofs can be represented using compact ranges, which enable several optimizations that we take advantage of in both our gossip protocol and our verifiable registry.

In a Merkle tree, we define a \emph{range} of consecutive leaves as $\range = [L,R)$, where $L$ is the index of the leftmost leaf and $R-1$ is the index of the rightmost one.
We call the \emph{coverage} of a node $\node$ the set of leaves that are descendants of $\node$; we denote this by $\coverage(\node)$.  Similarly, the coverage of a set of nodes $S$ is $\cup_{\node\in S} \coverage(\node)$.  
We call a set $S$ the \emph{covering set} of a range $\range$ if $\coverage(S) = \range$.

\begin{definition}[Compact range]
The \emph{compact range} $\compact{\range}$ of a range $\range$ is the smallest covering set $S$ such that each node $\node\in S$ covers exactly $2^{\hgt(\node)}$ leaves; i.e., $\compact{\range} = \min\{S~:~\coverage(S) = \range~\land~ |\coverage(\node)| = 2^{\hgt(\node)}~\forall~\node\in S\}$.
\end{definition}

We consider a recursive algorithm for computing a compact range, as specified in Algorithm~\ref{alg:compact-range} in Appendix~\ref{sec:compact-app} (most of our pseudocode is in this appendix).
It constructs the compact range recursively and in a greedy manner: going down both left and right from the root, it tries to take nodes highest up in the tree, which cover the largest number of leaves.  If a node's coverage does not overlap with the range $[L,R)$, the algorithm does not continue with this branch.  If the coverage is too broad or the node's subtree is not perfect, the algorithm continues to its children.  If the coverage fits within the range, the algorithm adds it to the compact range.  The runtime of \compactrangealg is $O(\log N)$ and $|\compact{\range}| = O(\log N)$ as well.  This is because, due to the greedy nature of the algorithm, a node is added only if its parent is not added.  There are two reasons why this can happen: (1) this is a right child, and the parent's coverage begins before $L$, and (2) this is a left child, and the parent's coverage ends after $R$.  Each of these situations can happen for only one node per height, so the algorithm takes at most two nodes per height.  Since there are at most $\log(R-L)+1$ heights that can have a perfect subtree fitting within the range $[L,R)$, $|\compact{\range}| \leq 2\cdot(\log(R-L)+1) = O(\log N)$.  
In some cases the compact range can be smaller, as shown in Figure~\ref{fig:pretty-compact-ranges}.  
In the most extreme case, the range of all leaves in a tree of size $2^n$ can be represented by just the root.

\newcommand\compactalg{
\begin{algorithm}[t]
\KwIn{A node $\node$ in a Merkle tree of size $N$ (initially set to $\croot$), and a range $\range = [L,R)$ such that $0\leq L\leq R \leq N$}
$\mathsf{left} \gets \min(\coverage(\node))$ \\
$\mathsf{right} \gets \mathsf{left}+2^{\hgt(\node)}$ \\
\uIf{$\mathsf{left} \geq L$ \emph{\textbf{and}} $\mathsf{right} \leq R$}{
\KwRet $[\node]$ \\
}
\uElseIf{$\mathsf{left} \geq R$ \emph{\textbf{or}} $\mathsf{right} \leq L$}{
\KwRet $[~]$ \\
}
\uElse{
$d_\ell\gets\compactrangealg(\node[\ell],L,R)$ \\
$d_r\gets\compactrangealg(\node[r],L,R)$ \\
\KwRet $d_\ell \| d_r$ \\
}
\caption{\compactrangealg}
\label{alg:compact-range}
\end{algorithm}
}

\begin{figure}
\centering
\includegraphics[width=\linewidth]{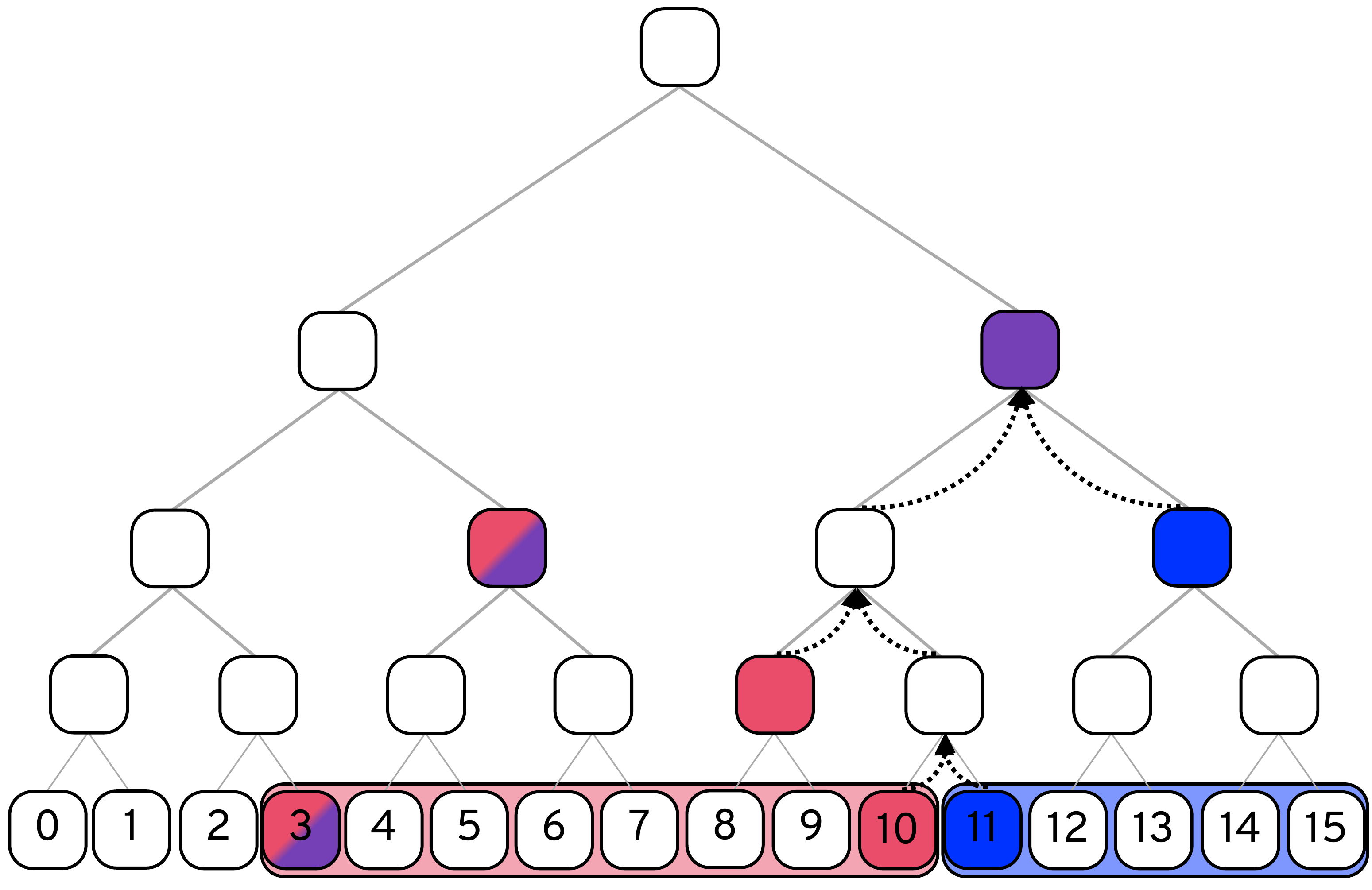}
\caption{In a tree of size 16, the compact ranges $\compact{[3,11)}$ (in red) and $\compact{[11,16)}$ (in blue).  The nodes in purple represent the merged compact range $\compact{[3,16)}$ and the dotted black lines represent the merging operations in Algorithm~\ref{alg:compact-merge}.  The two nodes that are both red and purple are included in both $\compact{[3,11)}$ and $\compact{[3,16)}$.}
\label{fig:pretty-compact-ranges}
\end{figure}

Beyond serving as a succinct representation of a sequence of leaves, compact ranges can be efficiently \emph{merged}, using \mergealg (Algorithm~\ref{alg:compact-merge}).  This algorithm starts by taking the union of the nodes in multiple ranges, and then looks for nodes with the same parent, starting at the lowest height.  If it finds any such pairs, it replaces them with their parent.  %
These pairs can be found only along the edges of each compact range (e.g., 
the rightmost node in the left range and the leftmost node in the right range), since otherwise they would not have been added to the original range.  This means they can be identified in constant time at every step, so the runtime of \mergealg is $O(\log N)$ and the merged compact range is also of size $O(\log N)$.  Beyond asymptotics, merging two compact ranges $\compact{[L_1,R_1)}$ and $\compact{[L_2,R_2)}$ is most impactful in the case that $L_2 = R_1$ (i.e., the two ranges share a border), since we can expect to have more nodes with shared parents here, as depicted in Figure~\ref{fig:pretty-compact-ranges}.

\newcommand\compactmergealg{
\begin{algorithm}[t]
\KwIn{Compact ranges $\{\compact{[L_i,R_i)}\}_i$}
$\mathsf{merged}\gets \bigcup_i \compact{[L_i,R_i)}_i$ \\
\While{$\exists~\node_1\neq \node_2\in\mathsf{merged}~:~\node_1[\parent] = \node_2[\parent]$}{
remove $\node_1,\node_2$ from $\mathsf{merged}$ \\
add $\node_1[\parent]$ to $\mathsf{merged}$ \\
}
\KwRet $\mathsf{merged}$ \\
\pavel{maybe elaborate that adding a parent also means computing its hash from the children v1 and v2}
\caption{\mergealg}
\label{alg:compact-merge}
\end{algorithm}
}

\newcommand\compactrangerootalg{
\begin{algorithm}[t]
\KwIn{A compact range $\compact{[0,N)}$ for $N > 0$}
\KwOut{The root hash $\roothash$ of a tree of size $N$}
$c\gets\textsc{SortedHashStack}(\compact{[0,N)},\mathsf{key}=\hgt)$ \\
\While{$|c| > 1$}{
$c_r\gets c.\mathsf{pop}()$ \\
$c_\ell\gets c.\mathsf{pop}()$ \\
$h\gets H(c_\ell \| c_r)$ \\
$c.\mathsf{push}(h)$ \\
}
\KwRet $c[0]$ \\
\caption{\rangerootalg}
\label{alg:range-to-root}
\end{algorithm}
}

If we have a compact range for $\range=[0,\ell)$ then the nodes in this compact range allow us to compute the root of the history tree of size $\ell$, using \rangerootalg as shown in Algorithm~\ref{alg:range-to-root} and proved in Lemma~\ref{lem:range-to-root}.

\newcommand\lemmarangeroot{
\begin{lemma}\label{lem:range-to-root}
For all $\ell\in\N$, \rangerootalg (Algorithm~\ref{alg:range-to-root}) correctly computes the root of the history tree of size $\ell$, given the nodes in $\compact{[0,\ell)}$.
\end{lemma}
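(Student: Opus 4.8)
Throughout assume $\ell\ge1$ (the algorithm requires $N>0$). The plan is to reduce the statement to a closed-form recursion for the root hash and then verify that \rangerootalg reproduces it. The first step is to pin down the structure of $\compact{[0,\ell)}$: writing $\ell$ in binary as $\ell = 2^{a_1}+2^{a_2}+\cdots+2^{a_k}$ with $a_1 > a_2 > \cdots > a_k \ge 0$, I claim $\compact{[0,\ell)} = \{\node_1,\ldots,\node_k\}$, where $\node_j$ (listed left to right) is the root of the perfect subtree of size $2^{a_j}$ covering leaves $[\,\sum_{i<j}2^{a_i},\ \sum_{i\le j}2^{a_i}\,)$; in particular $\hgt(\node_j) = a_j$, so the heights are all distinct and strictly decrease from left to right. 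This follows by induction on $\ell$ from the recursive definition of a history tree (Section~\ref{sec:history-trees}) --- its left subtree is exactly the perfect tree of size $2^{a_1}$ and its right subtree is the history tree of size $\ell-2^{a_1}$, to which the inductive hypothesis applies --- together with the definition of a compact range as the smallest set of maximal-perfect-subtree roots covering $[0,\ell)$. (When $\ell$ is a power of two, $k=1$ and $\node_1$ is the root itself.)

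Next, a short induction on $k$ (via the same decomposition) shows that the hash label of the root of the history tree of size $\ell$ equals the right-nested hash $H(h_1 \,\|\, H(h_2 \,\|\, \cdots \,\|\, H(h_{k-1}\|h_k)\cdots))$, writing $h_j := \node_j[h]$, with the base case $k=1$ giving just $h_1$. The step is immediate: the root combines the hash $h_1$ of its size-$2^{a_1}$ perfect left subtree with the root hash of the history tree of size $\ell-2^{a_1}$, which by induction is the right-nested hash over $h_2,\ldots,h_k$.

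Finally I would show \rangerootalg outputs this nested hash by a loop invariant on the stack $c$. Initially $c$ holds $h_1,\ldots,h_k$ sorted by height, i.e.\ (top to bottom) $h_k, h_{k-1},\ldots,h_1$; if $k=1$ the loop body is skipped and $h_1$ --- the root hash --- is returned. For $k\ge2$, the invariant is that after $t$ iterations ($0\le t\le k-1$) the stack is, top to bottom, $g^{(t)}, \node_{k-t-1},\ldots,\node_1$, where $g^{(t)}$ is the hash label of the root of the history subtree over the leaf range $[\,\sum_{i\le k-t-1}2^{a_i},\ \ell\,)$; thus $g^{(0)} = h_k$ and, for $t\ge1$, $g^{(t)} = H(h_{k-t}\,\|\,H(h_{k-t+1}\,\|\,\cdots\,\|\,h_k)\cdots)$ by the computation of the preceding paragraph applied to that range. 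The key point is that $\hgt(g^{(t)})\le\hgt(\node_{k-t-1})$ --- it equals $a_k\le a_{k-1}$ when $t=0$ and $a_{k-t}+1\le a_{k-t-1}$ when $t\ge1$, both by strict monotonicity of the $a_i$ --- so $g^{(t)}$ is (at least weakly) the minimum-height element and sits on top of $c$; the next iteration therefore pops $c_r = g^{(t)}$, then $c_\ell = \node_{k-t-1}$, and pushes $H(\node_{k-t-1}[h]\,\|\,g^{(t)}) = g^{(t+1)}$, re-establishing the invariant. After $k-1$ iterations $c$ holds the single element $g^{(k-1)}$, which is the nested hash above and hence the root hash, and is returned.

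I expect the main obstacle to be bookkeeping rather than a deep idea. One must be careful about how the height-keyed stack behaves when heights collide --- which genuinely happens (e.g.\ for $\ell = 2^n-1$ every combined node $g^{(t)}$ ties in height with the adjacent $\node_{k-t-1}$) --- and in particular that the freshly pushed node stays on top, so that it becomes the \emph{right} child in the next combine, matching the right-leaning recursive shape of a history tree. Nailing down the structural characterization of $\compact{[0,\ell)}$ in the first step directly from the definitions also takes a little care (one needs that the canonical binary decomposition is the \emph{unique} minimal covering by perfect subtrees), but is routine.
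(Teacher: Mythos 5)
Your proof is correct and follows essentially the same route as the paper's: decompose $\ell$ into its binary representation, characterize $\compact{[0,\ell)}$ as the roots of the corresponding frozen perfect subtrees, observe that the root hash is the right-nested Merkle hash over those nodes, and check that the stack-based loop reproduces that nesting. Your version is somewhat more explicit than the paper's (you state the loop invariant and verify the height comparison that keeps the freshly pushed node on top), but the underlying decomposition and argument are the same.
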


\begin{proof}
First, if $\ell = 2^k$ for some value $k$, then the compact range contains one node, which is the root of the history tree of size $\ell$ (because the shape of a history tree is uniquely defined by its number of leaves).  Its hash is thus by definition the root hash, meaning \rangerootalg is successful.

More generally, the left subtree of a history tree is (by definition) of size $2^i$, where $i = \lfloor\log(\ell-1)\rfloor$.  We can use the recursive formula for computing Merkle hashes to define $\roothash = \mth(0,\ell) = H(\mth(0,2^i) \| \mth(2^i,\ell))$, where $\mth(i_1,i_2)$ is the Merkle hash of the data contained between the $i_1$-th and $i_2$-th leaves~\cite[Section 2.1]{6962}.  The right subtree is itself a history tree, meaning we can further decompose $\mth(2^i,\ell) = H(\mth(2^i,2^i + 2^{i_2})\| \mth(2^i + 2^{i_2},\ell))$, where $i_2 = \lfloor \log(\ell - 2^i - 1)\rfloor$.  This means that if we create a stack using $\mth(0,2^i)$, $\mth(2^i,2^i + 2^{i_2})$, and $\mth(2^i + 2^{i_2},\ell)$ (pushing values in that order) then the steps in Algorithm~\ref{alg:range-to-root} correctly compute the root, which is $H(\mth(0,2^i) \| H(\mth(2^i,2^i + 2^{i_2}) \| \mth(2^i + 2^{i_2},\ell)))$.  If we fully decompose $\ell = 2^{i_1} + 2^{i_2} + \ldots + 2^{i_j}$, where $i_1 > \cdots > i_j$ then (1) the greedy nature of compact ranges (Algorithm~\ref{alg:compact-range}) and (2) the fact that complete subtrees are ``frozen'' in history trees of every size mean that the $k$-th element of the compact range exactly represents $\mth(S_k, S_k + 2^{i_k})$, where $S_1 = 0$ and $S_k = \sum_{m=1}^{k-1} 2^{i_m}$ for all $k$, $1\leq k\leq j$.  In other words, the hashes that are part of the compact range are exactly the subtree hashes necessary to compute the root.
\end{proof}
}

As shown in Algorithm~\ref{alg:ver-incl}, the combined ability to merge ranges and compute roots from them means we can formulate inclusion proofs in terms of compact ranges.  In particular, for a tree of size $N$ a proof of inclusion of an entry at the $i$-th leaf can consist of $\compact{[0,i)}$ and $\compact{[i+1,N)}$.  A verifier in possession of a checkpoint of the form $(\roothash,N)$ can then merge these ranges with the entry itself to get $\compact{[0,N)}$, and from that compute the root hash of the tree of size $N$ and compare it to $\roothash$.  This approach also generalizes to proving inclusion of a range of consecutive leaf values, in which case the verifier takes in multiple consecutive entries instead of just one.  Even in the case of non-consecutive leaf values at indices $i_1,\ldots,i_k$, we can think of an inclusion proof as consisting of $\{\compact{[i_j+1,i_{j+1})}\}_{j=0}^k$, where $i_0 = -1$ and $i_{k+1} = N$.

\begin{algorithm}[t]
\KwIn{A checkpoint $(\roothash,N)$, leaf $\mathsf{L}$, and proof $\pi$}
$(\compact{\range_1},\compact{\range_2})\gets\pi$ \\
$\mathsf{merged}\gets\mergealg(\compact{\range_1},\mathsf{L},\compact{\range_2})$ \\
$\roothash_m\gets\rangerootalg(\mathsf{merged})$ \\
\uIf{$\roothash_m = \roothash$}{
\KwRet 1 \\
}
\Else {
\KwRet 0 \\
}
\caption{$\verincl$}
\label{alg:ver-incl}
\end{algorithm}

Similarly, we can formulate consistency proofs in terms of compact ranges, as shown in Algorithm~\ref{alg:ver-append}.  This is because a consistency proof from $(\roothash,\ell)$ to $(\roothash_\new,\ell_\new)$ is essentially a combination of (1) a set of nodes that allows us to compute $\roothash$ within the tree of size $\ell$ and (2) a set of nodes that allows us to compute $\roothash_\new$ incrementally from this first set of nodes.  This first set is just the compact range $\compact{[0,\ell)}$, because we can use this to compute the root hash of the tree of size $\ell$ and compare it to $\roothash$.  Similarly, the second set is just the compact range $\compact{[\ell,\ell_\new)}$, because we can merge this with the first range to form $\compact{[0,\ell_\new)}$ and from this compute the root hash of the tree of size $\ell_\new$ and compare it to $\roothash_\new$.

\begin{algorithm}[t]
\KwIn{Checkpoints $(\roothash,\ell)$ and $(\roothash_\new,\ell_\new)$, and a proof $\pi$}
$(\compact{\range_1},\compact{\range_2})\gets\pi$ \\
$\roothash_1 \gets \rangerootalg(\compact{\range_1})$ \\
\uIf{$\roothash_1 = \roothash$}{
$\mathsf{merged}\gets\mergealg(\compact{\range_1},\compact{\range_2})$ \\
$\roothash_2 \gets \rangerootalg(\mathsf{merged})$ \\
\uIf{$\roothash_2 = \roothash_\new$}{
\KwRet 1 \\
}
\Else {
\KwRet 0 \\
}
}
\Else{ 
\KwRet 0 \\
}
\caption{$\verappend$}
\label{alg:ver-append}
\end{algorithm}

In the typical formulation, clients check consistency proofs and store only the new checkpoint containing $(\roothash_\new,\ell_\new)$.  Instead, clients who expect to update their checkpoints often could store $\compact{[0,\ell_\new)} = \textsc{Merge}(\compact{[0,\ell)}, \compact{[\ell,\ell_\new)})$ and then ask for (and verify) only the second compact range on their next update.  This imposes a higher storage requirement, since $\compact{[0,\ell_\new)}$ is $O(\log(\ell_\new))$, but allows them to use less bandwidth and perform less computation, since the second set of nodes is of size at most $2\log(\ell_\new-\ell)$.  Assuming large logs and relatively frequent updates (on the order of hours or days rather than months or years), this means consistency proofs can be considered constant-sized, as we explore in Section~\ref{sec:gossip-performance}.

\section{Gossiping about Verifiable Logs}\label{sec:gossip}

As argued in Section~\ref{sec:gossip-defs}, detecting split-view attacks is not possible if clients do not know about the checkpoints visible to other clients, which means they need to hear about them from someone other than the (untrusted) server.  Consequently, they either need to communicate with other clients (which, as argued in the introduction, is hard to implement and not scalable), or another participant needs to play this role~\cite{sundr}.  

Our proposed protocol thus introduces a new participant called a \emph{witness}~\cite{SP:STVWJG16}, although this role could be played by other log servers without affecting security.  To make it as easy as possible to recruit new participants, our goal is to minimize the requirements placed on these witnesses in terms of their uptime, interaction with each other, and computational and storage costs.  Even without interaction between witnesses, our protocol is able to achieve a loose form of consensus amongst them, which means we can provably prevent split-view attacks rather than just detect them retrospectively.

\subsection{Our proposed protocol}

Our proposed gossip protocol considers an additional type of participant called a witness, who is involved in the $\clientupdate$ interaction introduced in Section~\ref{sec:gossip-defs}.  Each witness stores a list of checkpoints $\snaps$ associated with a given server.  A checkpoint is of the form $\snap = (\roothash,\ell,t,\sigma)$, where $\roothash$ is a commitment to a log of size $\ell$ (e.g., the root hash of a Merkle tree), $t$ is a timestamp, and $\sigma$ is a signature over $(\roothash,\ell,t)$.  This means that $\version(\snap) = \ell$.  To ensure that witnesses can keep their storage costs fixed, we consider that $\snaps$ has a fixed maximum length.  If a witness attempts to add a new checkpoint to a full list, it first removes an existing checkpoint according to some eviction strategy (e.g., it removes the oldest checkpoint, according to $\ell$ and then $t$).

Our new $\clientupdate$ protocol proceeds in two phases: first, each server \emph{broadcasts} its latest checkpoint to the set of $\numwitnesses$ witnesses, who store it if they are online and verify it as being consistent with previous checkpoints.  Then, clients \emph{collect} checkpoints from each of the witnesses in the hopes of finding at least one checkpoint that a sufficient majority $Q$ agree is valid.  We assume that all participants are aware of each server's public key $\pk_S$ and each witness' public key $\pk_W$ for a digital signature scheme $(\keygen,\sign,\verify)$.
\gary{Do we have a formal definition for consistency?} 
We begin by describing the broadcast phase.

\begin{sarahlist}
\item[B1]  A server broadcasts a new checkpoint $\snap_\new$ to the witnesses.

\gary{Do we need to describe a registration protocol previous to this by which servers can become aware of witnesses $\numwitnesses$?}

\item[B2] If a witness is online and receives $\snap_\new = (\roothash,\ell,t,\sigma)$, it first checks that $\verify(\pk_S,(\roothash,\ell,t),\sigma) = 1$ to ensure that this is a valid checkpoint for this server.  If this passes, it sends back its freshest stored checkpoint $\snap$ (i.e., the checkpoint with the highest version) in order to request a consistency proof. 
\gary{$\verify$ does not appear in the operations of the Verifiable Log. Should it be added? I think this is the first place we expand the definition of $\snap$ to a tuple.}
\gary{Is it worth noting that this is one possible strategy of many for acquiring consistency proofs?}

\item[B3] The server forms $\pi\gets\proveappend(\mlog,\snap,\allowbreak\snap_\new)$ and sends this back.

\item[B4] If $\verappend(\snap,\snap_\new,\pi) = 1$, the witness adds $\snap_\new$ to its list.  
If the proof doesn't verify, the witness can send $(\snap,\snap_\new,\pi)$ to an auditor for further investigation. 
\end{sarahlist}
\gary{Is one invalid proof sufficient to prove malicious intent, or is it the absence of an available proof for a certain amount of time, or will the absence of a proof automatically ensure consensus is not reached?}

Next, we describe the collection phase.

\begin{sarahlist}
\item[C1] A client broadcasts a request to the witnesses for a list of $\numreqsnap$ checkpoints, according to some request policy (e.g., the $\numreqsnap$ most recent checkpoints for a given server).

\item[C2] If a witness is online and receives this request, it forms this list and forms $\sigma_i\randpick\sign(\sk_W,\snap_i)$ for every $\snap_i$ in it.  It then responds with the list $\{\snap_i,\sigma_i\}_{i=1}^{\numreqsnap}$. 

\item[C3] Upon receiving $\{\snap_i,\sigma_i\}_{i=1}^{\numreqsnap}$, the client first parses $(\roothash_i,\ell_i,t_i,\sigma_i')\gets\snap_i$ and checks that $\verify(\pk_W,\allowbreak\snap_i,\sigma_i)=1$ and $\verify(\pk_S,(\roothash_i,\ell_i,t_i),\sigma_i') = 1$ for all $i$, $1\leq i\leq \numreqsnap$.  
After hearing from $\mu$ witnesses, the client ends up with a set of checkpoints $\{\snap_{i,j}\}_{i,j=1}^{\numreqsnap,\mu}$.  It then accepts as valid all checkpoints that it received from at least $Q$ witnesses, and updates its local state with the freshest valid checkpoint.  If there is no consensus on any checkpoint, it repeats the process starting from Step~C1.
\end{sarahlist}

\subsection{Security}

To prove this protocol secure, we consider that checkpoints accepted by honest clients fall within a sort of consensus view, as they must have been seen by at least $Q$ witnesses.  
As such, we want to argue that the protocol achieves the standard notions of \emph{safety}, meaning honest clients accept only ``good'' checkpoints, and \emph{liveness}, meaning honest clients can continue to update their checkpoints.  %

We allow up to $F$ of the $\numwitnesses$ witnesses to be adversarial, where we define $\numwitnesses = VF +1$ (e.g., traditional consensus protocols often consider $\numwitnesses = 3F+1$, or $V=3$).  

\begin{lemma}\label{lem:bundle-size}
If the signature scheme is unforgeable and a client receives valid signatures from at least $Q=\frac{(V+1)\cdot F}{2} + 1$ witnesses on two distinct checkpoints $\snap$ and $\snap'$, where the total number of witnesses is $\numwitnesses = VF + 1$, then there exists at least one honest witness who has signed both $\snap$ and $\snap'$.
\end{lemma}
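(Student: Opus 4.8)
The plan is to argue by a counting/pigeonhole argument over the set of witnesses, using unforgeability to reduce to the case where valid signatures come only from the witnesses themselves (not from the adversary's forgeries). First I would set up the reduction: if the adversary produces two distinct checkpoints $\snap$ and $\snap'$ each carrying $Q$ valid witness signatures, then either every such signature was genuinely produced by the corresponding witness, or the adversary has forged a signature under some $\pk_W$ — and the latter happens with only negligible probability by the assumed EUF-CMA security of $(\keygen,\sign,\verify)$. So it suffices to prove the combinatorial claim assuming all $2Q$ signatures are authentic (produced by honest or adversarial witnesses acting as signers).

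Next I would do the counting. Let $S$ be the set of witnesses who signed $\snap$ and $S'$ the set who signed $\snap'$; we have $|S| \ge Q$ and $|S'| \ge Q$. The key observation is that the adversarial witnesses — of which there are at most $F$ — can sign both checkpoints freely, but we want to show some \emph{honest} witness is in $S \cap S'$. By inclusion–exclusion, $|S \cap S'| \ge |S| + |S'| - \numwitnesses \ge 2Q - (VF+1)$. Plugging in $Q = \frac{(V+1)F}{2} + 1$ gives $2Q - (VF+1) = (V+1)F + 2 - VF - 1 = F + 1$. Hence $|S \cap S'| \ge F+1$, and since at most $F$ of these can be adversarial, at least one witness in $S \cap S'$ is honest. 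That honest witness has then signed both $\snap$ and $\snap'$, which is exactly the conclusion.

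I would then close the argument by combining the two pieces: the event "client holds $Q$ valid signatures on each of two distinct checkpoints but no honest witness signed both" is contained in the event "some signature under an honest $\pk_W$ was forged," which is negligible; in the complementary (overwhelming-probability) case the counting argument above forces an honest witness to have signed both. One subtlety worth spelling out is the role of "distinct": an honest witness following the protocol will happily sign \emph{many} checkpoints (it signs whatever it has stored in $\snaps$), so merely "an honest witness signed both $\snap$ and $\snap'$" is not yet a contradiction — it is the hook that a later lemma will use, namely that an honest witness only stores (and hence only signs) checkpoints it has verified to be mutually consistent via $\verappend$. That downstream use is outside the scope of this lemma, so here I only need the existence claim.

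The main obstacle is not conceptual depth but making the reduction to unforgeability fully rigorous: one must be careful that "valid signatures from at least $Q$ witnesses" is quantified correctly (the client's acceptance rule counts distinct witnesses, not distinct signature strings), and that a hybrid/union-bound over the $\numwitnesses$ public keys is invoked so the forgery probability stays negligible. The arithmetic identity $2Q - \numwitnesses = F+1$ is the crux of the counting and should be stated cleanly; everything else is routine.
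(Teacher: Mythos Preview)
Your proposal is correct and matches the paper's own proof: both compute $2Q = (V+1)F + 2 = \numwitnesses + F + 1$, conclude via pigeonhole that at least $F+1$ witnesses signed both checkpoints, and then invoke unforgeability to rule out that all of these are adversarial. Your write-up is somewhat more explicit about the reduction to EUF-CMA and the inclusion--exclusion step, but the argument is identical in substance.
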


\begin{proof}
If the client has two sets of valid signatures, where each set is of this size $Q$, then they have 
\begin{align*}
2((V+1)\cdot F/2 + 1) &= (V+1)\cdot F +2 \\
&= (VF + 1) + F + 1\\
&= \numwitnesses + F + 1
\end{align*} 
valid signatures.  Since there are only $\numwitnesses$ witnesses in total, this means that $F + 1$ witnesses must have signed both $\snap$ and $\snap'$.  Assuming signature unforgeability, which means an adversarial witness cannot form a valid signature that looks like it came from an honest one, this further implies that at least one honest witness signed both checkpoints.
\end{proof}

If $\numwitnesses = 3F + 1$, this means that clients need to see signatures from at least $2F+1$ witnesses; alternatively, if they require signatures from every witness then we can tolerate having all but one witness be adversarial.%

To further consider liveness, we need to acknowledge that not all witnesses may be online at all times.  In particular, we need to consider the \emph{uptime} $U$ of each witness, in terms of the probability that it is online at a given point in time.  Most consensus protocols operate in a \emph{partially synchronous} network model~\cite{dln88}, meaning messages between honest participants are delivered within some bound $\Delta$, rather than an \emph{asynchronous} model~\cite{ckps01} where an adversary controls the delivery of all messages and thus they can be delayed arbitrarily.  Our consideration of uptime is more akin to the ``sleepy'' model due to Pass and Shi~\cite{sleepy}, in which all honest nodes have access to a weakly synchronized clock (meaning messages are delivered within some bounded delay) but may periodically go offline or be ``asleep.''  Unlike Pass and Shi, we do not assume that a sleepy node receives all previously delivered messages when they wake up; instead, we consider these messages as lost forever.

\begin{theorem}\label{thm:gossip-liveness}
If (1) the minimum uptime $U$ of any honest witness is at least $\frac{(V+3)F + 2}{2\numwitnesses}$, (2) all online honest participants receive messages from other online honest participants within some delay $\Delta$, and (3) $Q$ is defined as in Lemma~\ref{lem:bundle-size}, then an honest client who has accepted a checkpoint $\snap$ will eventually accept another checkpoint $\snap_\new$ such that $\version(\snap_\new) > \version(\snap)$.
\end{theorem}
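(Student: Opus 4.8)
The plan is to show that after a client has accepted some checkpoint $\snap$, within a bounded number of rounds there is a moment at which (a) the server has broadcast a fresh checkpoint $\snap_\new$ with $\version(\snap_\new) > \version(\snap)$ that was successfully stored by ``enough'' online honest witnesses, and (b) when the client subsequently runs the collection phase, ``enough'' of those honest witnesses are simultaneously online to answer, so that the client hears $\snap_\new$ from at least $Q$ witnesses and accepts it. The crux is a counting argument: the uptime bound in hypothesis~(1) is engineered precisely so that the \emph{expected} number of honest witnesses online at any fixed time, minus the worst-case number of adversarial witnesses, still clears the threshold $Q$ from Lemma~\ref{lem:bundle-size}; more carefully, it must guarantee that the set of honest witnesses online during the broadcast and the set online during the collection \emph{overlap} in at least $Q - F$ members (since the $\le F$ adversarial witnesses may or may not cooperate). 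I would make this overlap explicit and then invoke the delay bound $\Delta$ from hypothesis~(2) to argue the two phases can be sequenced within a bounded window.

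\textbf{Step 1 (honest witnesses store the fresh checkpoint).} I would first argue that when the server broadcasts $\snap_\new$ (steps B1--B4), every honest witness that is online and whose own freshest stored checkpoint is consistent with $\snap_\new$ will, after one $\proveappend$/$\verappend$ round trip (taking time $O(\Delta)$), add $\snap_\new$ to its list. Since the server is not assumed honest here, I should note that a malicious server could refuse to serve some witnesses; but liveness only needs \emph{some} sequence of checkpoints to propagate, and an honest client who never advances would be evidence the server is withholding from essentially everyone — so the right framing is: \emph{if} the client is to make progress at all, the server must eventually broadcast a checkpoint that clears verification at $\ge Q$ witnesses, and it is precisely this event whose feasibility (given uptime) we establish. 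I'd phrase Step~1 as: there is a time $t_0$ and a set $\mathcal{H}_0$ of honest online witnesses with $|\mathcal{H}_0|$ large (in expectation $\ge U \cdot (\numwitnesses - F)$, minus fluctuation) that store $\snap_\new$ by time $t_0 + O(\Delta)$.

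\textbf{Step 2 (enough of them are online to answer the client).} At the client's next collection attempt (steps C1--C3), at some time $t_1 > t_0 + O(\Delta)$, a fresh independent draw determines which witnesses are online. I would lower-bound the number of witnesses in $\mathcal{H}_0$ that are \emph{also} online at $t_1$ and serve $\snap_\new$; adding back the possibility that adversarial witnesses also (honestly, for once) report $\snap_\new$ does not help the adversary but also cannot be relied upon, so the client must collect $\ge Q$ copies from honest witnesses alone. Plugging $U \ge \frac{(V+3)F+2}{2\numwitnesses}$ and $\numwitnesses = VF+1$ into the bound and comparing against $Q = \frac{(V+1)F}{2}+1$ is the routine calculation that closes the argument; I expect it to work out with the ``$+2$'' and ``$/2$'' in the uptime expression accounting for the two phases each needing to clear roughly half the gap between $Q$ and full participation. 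One subtlety is whether the model treats each witness's online status as an i.i.d.\ coin per round (so a concentration or even a simple averaging/pigeonhole argument over sufficiently many rounds applies) or as an adversarially scheduled fraction $\ge U$ of the time; I would state which and, in the probabilistic case, argue that over enough repetitions of C1--C3 the client almost surely hits a good round, which suffices for ``eventually.''

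\textbf{The main obstacle} will be making the overlap/counting argument rigorous without hidden independence assumptions: the broadcast-phase online set and the collection-phase online set are at different times, and I need the uptime hypothesis to force $|\mathcal{H}_0 \cap (\text{online at } t_1)| \ge Q$ despite up to $F$ witnesses being adversarial and despite the server possibly being slow. The cleanest route is probably to show each ``bad'' round (client fails to reach consensus) forces either a witness to have been offline during a window where the uptime bound says it should mostly be online, or the server to have withheld a proof — and over many rounds a fixed uptime $\ge U$ makes the all-bad event have probability tending to zero, giving the ``eventually'' conclusion. I would also double-check the edge interaction with the bounded checkpoint list and eviction policy: I must ensure $\snap_\new$, once stored, is not evicted before the client's collection round, which follows because the eviction strategy removes the \emph{oldest} checkpoint and $\snap_\new$ is the newest — so only $O(\text{list size})$ further server broadcasts in the intervening $O(\Delta)$ window could threaten it, a condition I'd fold into the delay assumption.
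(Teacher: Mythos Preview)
Your proposal is substantially more elaborate than the paper's argument, and in one key respect it would not close with the stated bound.

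The paper's proof is a one-line counting argument concerning \emph{only} the collection phase: assume adversarial witnesses simply ignore client requests; at the moment the client issues its request, $U\cdot\numwitnesses$ witnesses are online, of which at most $F$ are adversarial, so at least $U\cdot\numwitnesses - F$ honest witnesses respond. Requiring $U\cdot\numwitnesses - F \ge Q$ and substituting $Q = \frac{(V+1)F}{2}+1$ yields exactly $U \ge \frac{(V+3)F+2}{2\numwitnesses}$. In particular, the factor of $2$ in the denominator comes from the $\frac{1}{2}$ in the definition of $Q$, not from splitting slack across two phases as you conjecture. The paper treats uptime as a deterministic guarantee (``there are $U\cdot\numwitnesses$ online''), does not model the broadcast phase at all, and implicitly assumes every honest online witness already holds $\snap_\new$ when queried.

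Your two-phase overlap argument --- requiring the set $\mathcal{H}_0$ of honest witnesses online at broadcast time to intersect the set online at collection time in at least $Q$ members --- is a genuinely stronger requirement, and it would \emph{not} follow from the hypothesis $U \ge \frac{(V+3)F+2}{2\numwitnesses}$ (you would need something closer to $U^2$ or $2U-1$ on the left). So if you pursued your Step~1/Step~2 decomposition literally, the calculation in Step~2 would fail. The concerns you raise (server withholding proofs, probabilistic vs.\ adversarial uptime, eviction before collection) are real gaps in the paper's informal argument, but the paper simply does not engage with them; it takes the most optimistic reading of ``minimum uptime $U$'' and of witness state, and reduces liveness to the single inequality $U\cdot\numwitnesses - F \ge Q$.
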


\begin{proof}
Without loss of generality, we assume that adversarial witnesses simply ignore client requests.  By our assumptions about online parties, all client requests will eventually reach all online honest witnesses and their responses will eventually reach the client.  In order for a client to accept $\snap_\new$, they must see signatures on it from at least $Q$ witnesses.  This means that, of the witnesses online at the time of the client's request (of which there are $U\cdot \numwitnesses$), at least $Q$ of them must be honest (as again, adversarial witnesses just ignore the request).  This means we need $U\cdot \numwitnesses- F \geq Q$, or 
\begin{align*}
U &\geq \frac{Q+F}{\numwitnesses} \\
&\geq \frac{\frac{(V+1)\cdot F}{2} + 1 + F}{\numwitnesses} \\
&\geq \frac{(V+3)F + 2}{2\numwitnesses},
\end{align*}
as desired.
\end{proof}

If $\numwitnesses= 3F+1$ then this theorem says that each witness must be available all the time; i.e., to have $U = 1$.  
This is in line with existing consensus protocols that allow for one-third of the participants to be adversarial but require all messages to be delivered eventually.  
We do not view perfect uptime as a realistic requirement, however, so explore the effect of different uptimes on performance and liveness in Section~\ref{sec:gossip-performance}.

We now prove that an honest client is guaranteed \emph{safety} if they accept only checkpoints that have been seen by $Q$ witnesses, as long as an honest \emph{mirror} has done the same.  We define mirrors as clients that maintain their own copy of the log entries.  
Mirrors can thus be used as ground-truth information about the log entries, which is needed in our proof of security.
\gary{Does this imply that clients that only accept. checkpoints that have been seen by $Q$ witnesses are not safe?}

\begin{theorem}\label{thm:gossip-safety}
If the hash function is collision-resistant and append-only security and membership security hold for the log (Definitions~\ref{def:append-only-log} and~\ref{def:membership-log}), then no split-view attack is possible between a client and a mirror who ran $\clientupdate$ with the same set of witnesses.
\end{theorem}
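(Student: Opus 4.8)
The plan is to assume, for contradiction, that a PT adversary controlling the server (and up to $F$ witnesses) wins the split-view game $\splitview{\A}{\secp}$ against a client and an honest mirror who both ran $\clientupdate$ with the same witness set. Winning means there is a triple $(\snap_0,\entry,\pi)\in A$ — so the client accepted $\snap_0$ and $\verincl(\snap_0,\entry,\pi)=1$ — and a checkpoint $\snap_1\in B$ accepted by the mirror with $\version(\snap_1)\geq\version(\snap_0)$, yet $\vercommit(\snap_1,\entries)$ holds for the mirror's true entry list $\entries$ with $\entry\notin\entries$. I would first apply Lemma~\ref{lem:bundle-size}: since both $\snap_0$ and $\snap_1$ were accepted, each was signed by at least $Q = \frac{(V+1)F}{2}+1$ witnesses, so there is at least one honest witness $W$ who signed (and hence stored, after verifying a consistency proof) both checkpoints. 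This is the linchpin that ties the two views together.

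Next I would use the behavior of the honest witness $W$ in the broadcast phase. Because $W$ only adds a checkpoint after checking $\verappend$ against its freshest stored checkpoint, the set of checkpoints $W$ has stored forms an append-only chain: any two checkpoints $W$ has accepted are consistent, in the sense that one is a valid consistency-proof extension of the other. In particular $\snap_0$ and $\snap_1$ both lie on $W$'s chain, and since $\version(\snap_1)\geq\version(\snap_0)$, there is a valid consistency proof (a composition of the proofs $W$ verified along the way) from $\snap_0$ to $\snap_1$ — or the two have equal version, in which case collision-resistance of $H$ forces $\snap_0$ and $\snap_1$ to commit to the same root and the same $\entries$. Assuming the nontrivial case $\version(\snap_1) > \version(\snap_0)$, I would invoke append-only security (Definition~\ref{def:append-only-log}): since $\vercommit(\snap_1,\entries)$ holds and we have a valid $\verappend$ proof from $\snap_0$ to $\snap_1$, there must exist a prefix $\entries[1:j]$ of $\entries$ to which $\snap_0$ commits, i.e.\ $\vercommit(\snap_0,\entries[1:j])$ — except with negligible probability.

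Finally I would derive the contradiction from membership security (Definition~\ref{def:membership-log}). We now have $\snap_0$ committing to $\entries[1:j]$ (a prefix of the mirror's true list) and a valid inclusion proof $\verincl(\snap_0,\entry,\pi)=1$, yet $\entry\notin\entries$ and hence $\entry\notin\entries[1:j]$. That is exactly a membership-security break for $\snap_0$, occurring with non-negligible probability — contradiction. Combining the reductions, $\advsplitview{\A}{\secp}$ is bounded by the advantages against signature unforgeability (from Lemma~\ref{lem:bundle-size}), collision-resistance of $H$, append-only security, and membership security, all of which are negligible.

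The main obstacle I anticipate is formalizing the claim that an honest witness's stored checkpoints form a genuinely append-only chain and that this yields a single valid $\verappend$ proof from $\snap_0$ to $\snap_1$: the witness verifies consistency pairwise against its \emph{freshest} checkpoint at each step, and eviction may remove intermediate checkpoints, so one has to argue that transitivity of consistency proofs (composing the per-step proofs the witness accepted) still produces a proof the append-only game will recognize — or, alternatively, reformulate append-only security transitively and show the composition is sound assuming collision-resistance. Handling the edge case $\version(\snap_0)=\version(\snap_1)$, and the possibility that $\snap_0$ was accepted by the client \emph{before} $W$ stored $\snap_1$ (so the chain ordering is by version, not by wall-clock acceptance time), also needs care but is routine once the chain structure is pinned down.
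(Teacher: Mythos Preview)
Your proposal is correct and follows essentially the same route as the paper: invoke Lemma~\ref{lem:bundle-size} to find a common honest witness, use that witness's chain of verified consistency proofs to link $\snap_0$ and $\snap_1$, apply append-only security to obtain a committed prefix, and finish with membership security.

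Two points of comparison are worth noting. First, the obstacle you flag about composing the witness's per-step consistency proofs into a single $\verappend$ proof is sidestepped in the paper: rather than composing, it applies append-only security \emph{iteratively} along the chain $\snap_0=\snap'_0,\snap'_1,\dots,\snap'_n=\snap_1$, arguing at each step $k$ that some prefix $\entries[1{:}j_k]$ is committed to by $\snap'_k$ (else $\B_2$ outputs $(\snap'_k,\snap'_{k+1},\entries[1{:}j_{k+1}],\pi_k)$ and breaks append-only security). This avoids any need for transitivity of consistency proofs and is indifferent to eviction. Second, you identify $\entries$ with ``the mirror's true entry list'', but in the game $\entries$ is the adversary's output; the paper spends its collision-resistance budget on showing that the mirror's maintained list $\entries'$ must equal this $\entries$ (two distinct lists both committed to by $\snap_1$ would yield a collision), after which the rest of the argument proceeds exactly as you outline.
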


\newcommand\safetyproof{
\begin{proof}
Let $\A$ be an adversary playing game $\splitview{\A}{\secp}$.  We build PT adversaries $\B_1$, $\B_2$, and $\B_3$ such that
\[
\advsplitview{\A}{\secp} \leq \advcr{\B_1}{\secp} + \advappend{\B_2}{\secp} + \advmemb{\B_3}{\secp},
\]
where $\advcr{\B_1}{\secp}$ denotes the probability of $\B_1$ outputting values $x_1$ and $x_2$ such that $H(x_1) = H(x_2)$.  We consider the winning conditions for $\splitview{\A}{\secp}$.  If $\A$ succeeds, then it outputs $\entries$ and there exist $(i,\snap,\entry,\pi)$ and $(j,\snap')$ such that (1) $\verincl(\snap,\entry,\pi)=1$ (by definition of the honest client), and (2) the honest client accepts $\snap$ and the honest mirror accepts $\snap'$; i.e., each of them ended the $\clientupdate$ interaction by changing their state to be this checkpoint.  Furthermore, the other winning conditions require that (3) $\version(\snap') \geq \version(\snap)$, (4) $\vercommit(\snap',\entries) = 1$, and (5) $\entry\notin\entries$.

We first consider the case that $\entries'\neq\entries$ for the list $\entries'$ maintained by the mirror, even though $\vercommit(\snap',\entries') = 1$ (by definition of the mirror) and $\vercommit(\snap',\entries) = 1$ (by the fourth condition).  If this were the case, meaning there were an index $k$ such that $\entries'[k]\neq\entries[k]$, there would be a node in the log that represents two distinct descendants: $\entries'[k]$ and $\entries[k]$.  This further implies that there are values $x$, $x'$, $y$, and $y'$ such that either $x\neq x'$ or $y\neq y'$, and $H(x\| y) = H(x'\| y')$, which means we can construct $\B_1$ to break collision resistance by outputting $x\| y$ and $x'\| y'$.  We thus assume for the remainder of the argument that $\entries' = \entries$.

By the second condition, the honest client and mirror respectively accept checkpoints $\snap$ and $\snap'$.  This means they must be stored in the list of at least $Q$ witnesses.  By Lemma~\ref{lem:bundle-size}, there thus exists at least one honest witness who has seen a path of valid consistency proofs $\pi_i$ from $\snap$ to $\snap'$ (by the third condition), meaning a series of checkpoints and proofs $\{\snap_i,\pi_i\}_{i=1}^n$ such that $\verappend(\snap_{i-1}, \snap_i, \pi_i) = 1$, where $\snap_0 = \snap$ and $\snap_n = \snap'$.  
For every $k$, there must exist a prefix $\entries_k$ of $\entries$ (because $\entries = \entries'$) such that $\vercommit(\snap_k,\entries_k)$; i.e., a list $\entries[1 : j_k]$ for which $\snap_k$ is a valid checkpoint.  If at some step $k$ there doesn't exist such a prefix, then we could construct an adversary $\B_2$ to break append-only security by outputting $(\snap_k, \snap_{k+1}, \entries[1 : j_{k+1}], \pi_k)$.  If this holds for all $k$, then there exists a list $\entries_c$ that is a prefix of $\entries$ and such that $\vercommit(\snap,\entries_c)=1$.

By the first condition, the client has seen a proof $\pi$ such that $\verincl(\snap,\entry,\pi) = 1$, but by the fifth condition $\entry\notin\entries$.  This further implies that $\entry\notin\entries_c$, which means we can construct an adversary $\B_3$ to break membership security by outputting $(\snap,\entry,\entries_c,\pi)$.
\end{proof}
}

A proof of this theorem can be found in Appendix~\ref{sec:safety-proof}.  
As a corollary, if multiple clients and a mirror all run $\clientupdate$ with the same set of witnesses, then no split-view attack is possible between any of the clients either.  This raises the question of how often the set of witnesses might change, or how the protocol might tolerate changes in witnesses.  We leave the latter question as future research, but observe that if the role of witnesses is played by existing servers in a setting like CT, this is a relatively static list that is furthermore distributed in an authoritative way,\footnote{\url{https://certificate.transparency.dev/logs/}} 
meaning clients can be sure that they all have the same list of witnesses.

Finally, we observe that while the threshold $Q$ must be at least as high as defined in Lemma~\ref{lem:bundle-size} in order to prove safety, some clients may want to use other policies.  For example, if clients in CT represent browser vendors (rather than individual browsers), the Chrome client might want to ensure that one of the $Q$ witnesses is run by Google.  Our protocol is flexible in this regard and provides clients with full discretion over which checkpoints they accept.

\subsection{Implementation and evaluation}
\label{sec:gossip-performance}

We implemented our gossip protocol in Go and ran simulations on a Linux workstation with an Intel Xeon CPU W-2135 (3.70\si{\giga\hertz}) and 65\si{\giga\byte} of RAM. Each server, witness, and client is run as a separate Goroutine, and a latency of 100\si{\milli\second} is imposed on communication between all entities to simulate a globally distributed deployment.  For the cryptographic operations, we draw on our own open-source implementation of compact ranges and digital signatures.\footnote{\ifsubmission{A link is not provided in the submission in order to maintain anonymity.}\else{\url{https://github.com/google/trillian/tree/master/merkle/compact}}\fi}

Our experiments are parameterized by five values: (1) the number of servers; (2) the number of witnesses $\numwitnesses$; (3) the number of clients; (4) the fraction of adversarial witnesses $V$; and (5) the minimum uptime $U$.
We chose to have each witness not evict any checkpoints, as a checkpoint is only 112~bytes (a 32-byte hash $h$, an 8-byte integer $\ell$, an 8-byte timestamp $t$, and a 64-byte signature $\sigma$).  It thus requires only on the order of hundreds of megabytes to store millions of checkpoints.

We consider three different configurations of these parameters, which we summarize in Table~\ref{tab:params} in Appendix~\ref{sec:params}.  Our ``aggressive'' setting captures a general transparency deployment, in which there are a large number of servers and clients but a small number of witnesses who support gossip for many different use cases.  We allow many (25\%) of these witnesses to be adversarial and impose almost the lowest possible uptime (90\%) given the constraints imposed by Theorem~\ref{thm:gossip-liveness}.  Our ``KT'' setting captures having a large number of clients (representing individual users) and a higher number of witnesses.  These witnesses have a low uptime requirement (85\%) and fewer are assumed to be adversarial (12.5\%).  Finally, our ``CT'' setting captures CT as it is deployed today, meaning having a small number of servers and witnesses, and a small number of clients given that clients are more likely to represent browser vendors than individual browsers.  We also expect more established organizations to act as witnesses (as discussed, this role could even be played by other servers), meaning we can impose a high minimum uptime (99\%) and assume a relatively low fraction of adversarial witnesses (12.5\%).

\subsubsection{Microbenchmarks}

There are four main operations in our gossip protocol: $\proveappend$, $\verappend$, $\sign$, and $\verify$.  For the first two, we use the optimization described at the end of Section~\ref{sec:compact}, in which witnesses with a checkpoint containing $(h,\ell)$ store $\compact{[0,\ell)}$ and ask only for the difference $\compact{[\ell,\ell_\new)}$ when updating to a new checkpoint containing $(h_\new,\ell_new)$.  We consider logs of size $2^{10}$ and $2^{26}$ in what follows, and consider that $\ell_\new-\ell$ is bounded by $30$.  These choices are inspired by how gossip works for the \mogname registry in Section~\ref{sec:registry}, and we justify them there.  
Averaged over thousands of runs, $\proveappend$ for a log of size $2^{26}$ and a difference of $30$ leaves took $6.04$\si{\micro\second}, $\verappend$ took $6.22$\si{\micro\second}, $\sign$ took $89.89$\si{\micro\second}, and $\verify$ took $184.17$\si{\micro\second}.

\subsubsection{Liveness}\label{sec:gossip-liveness}

In order to determine a client's ability to move their checkpoints forward, we ran the collection phase with the parameters defined by each of our three settings and with $\numreqsnap$ ranging from $1$ to $30$.  For each value, we had each client request the $\numreqsnap$ most recent checkpoints for each server, and then counted the fraction of servers for which the collective set of clients were, on average, able to find a new valid checkpoint after one request.  The results are plotted in Figure~\ref{fig:num-checkpoints}.  

\begin{figure}
\centering
\includegraphics[width=0.75\linewidth]{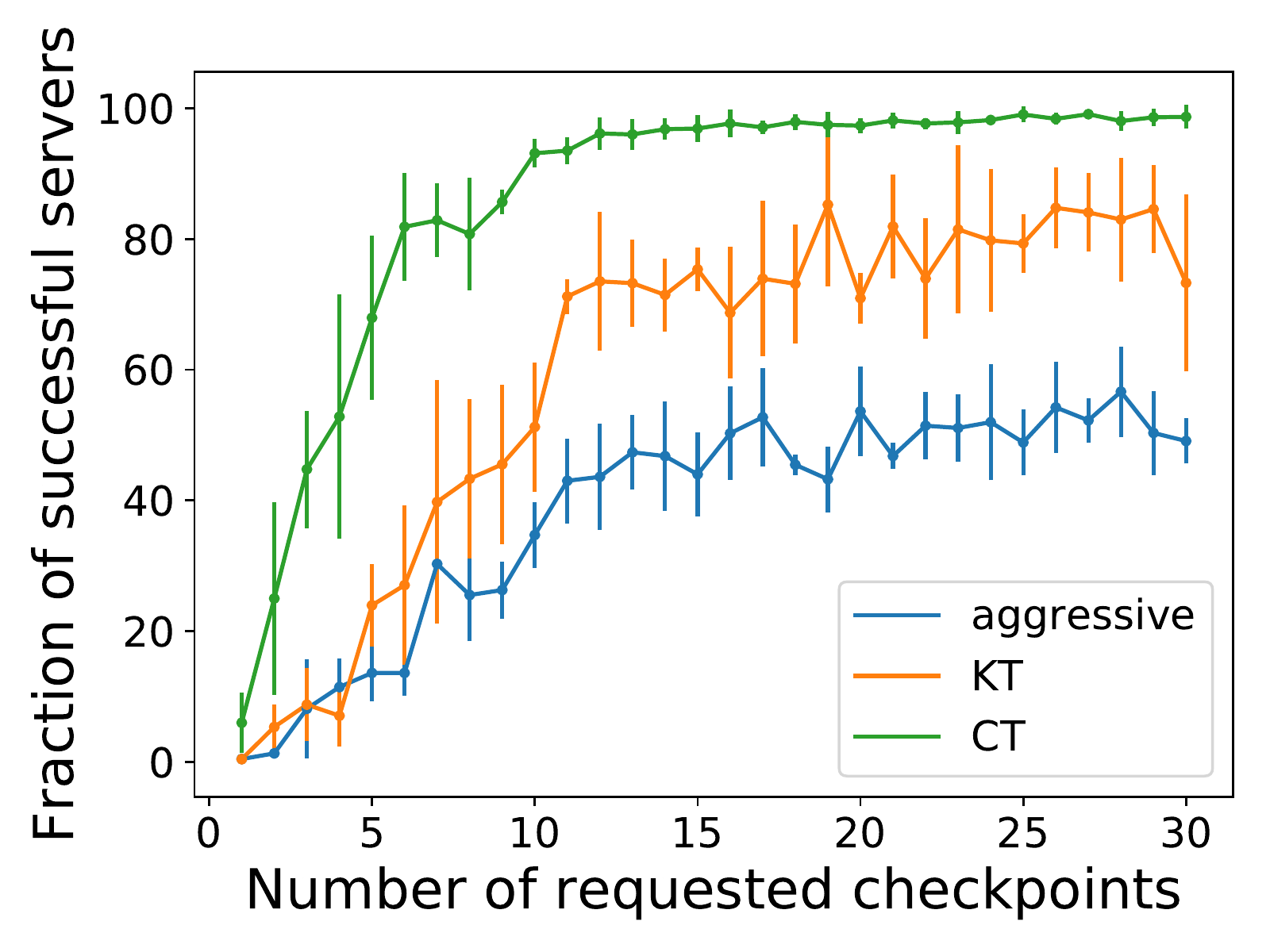}
\caption{The percentage of servers for which the average client was able to find consensus amongst $Q$ witnesses after one request for the $\numreqsnap$ most recent checkpoints per server, for each of our three settings.  The average was taken across all clients in a run and across five runs.}
\label{fig:num-checkpoints}
\end{figure}

Unsurprisingly, it is easiest for clients to get near-full consensus in the CT setting, in which witnesses have the highest uptime (99\%).  Despite the higher uptime in the aggressive setting than the KT setting, %
the fact that there are fewer honest witnesses means that clients are more likely to just get unlucky and not reach enough witnesses.  This suggests that in settings with low required uptime it is important to have more witnesses to provide the same level of coverage.
\gary{is there an equation that expresses this relationship? i.e. If an application wants a particular success probability (99\%), how many witnesses would be needed?}

\subsubsection{Latency and bandwidth}\label{sec:gossip-latency}

We consider the end-to-end runtime of, and bandwidth required by, both our broadcast and collection phases, according to the parameters defined by each of our settings.  
To pick the value of $\numreqsnap$ in the collection phase, we chose the minimal value for which clients found consensus for 50\% of servers: $16$ for the aggressive setting, $10$ for the KT setting, and $4$ for the CT setting.  We also consider the minimal value to achieve a threshold we consider more realistic for each setting.  In the aggressive setting, different servers may support completely different use cases, so there may be many servers with which clients never interact.  We thus set a threshold of 10\% here, which means $\numreqsnap=4$.  Similarly, in the KT setting different servers may contain keys for different messaging applications and thus clients may update their checkpoints at different points in time, so we set a threshold of 25\%, or $\numreqsnap=6$.  Finally, in the CT setting it is more likely that clients would want to update their checkpoints for many servers at the same time, so we set a threshold of 90\%, or $\numreqsnap=10$.
\gary{Is the fraction a percentage of servers also the likely hood of the protocol completing successfully for 1 server?}

We then measure the end-to-end latency in both phases, in terms of the total time (and bandwidth) required.  In the broadcast phase this is from the point at which a single checkpoint is broadcast (by a single server) to the point at which it is accepted by $Q$ witnesses, and in the collection phase this is from the point at which a client broadcasts a request for a single server to the point at which it finds a new valid checkpoint for that server.  The results are in Table~\ref{tab:latency}.

\begin{table}[t]
\centering
\setlength\tabcolsep{5pt}
{\footnotesize
\begin{tabular}{llccS[table-format=2.0]S[table-format=4.0]S[table-format=3.2]}
\toprule
Phase & Setting & Log size & $T$ & {$\numreqsnap$} & {Time (\si{\milli\second})} & {Size (\si{\kilo\byte})} \\
\midrule
B & Aggressive & $2^{26}$ & -- & {--} & 3060 & 2.2 \\
B & KT & $2^{26}$ & -- & {--} & 300 & 0.15 \\
B & CT & $2^{10}$ & -- & {--} & 480 & 0.25 \\
C & Aggressive & -- & 50 & 16 & 200 & 65.17\\
C & KT & -- & 50 & 10 & 202 & 145.23 \\
C & CT & -- & 50 & 4 & 200 & 17.46 \\
C & Aggressive & -- & 10 & 4 & 201 & 16.63 \\
C & KT & -- & 25 & 6 & 201 & 16.63 \\
C & CT & -- & 90 & 10 & 202 & 43.61 \\
\bottomrule
\end{tabular}
}
\caption{The average end-to-end runtime and required bandwidth for each of the two phases of our gossip protocol (B = Broadcast and C = Collection), in each of the settings we consider and averaged across five runs.  We use $T$ to denote the percentage of servers for which the client requests new checkpoints in the collection phase, which determines the number of requested checkpoints $\numreqsnap$.}
\label{tab:latency}
\end{table}

In both the KT and CT settings, checkpoints can be gossiped in less than a second.  In the aggressive setting, the low number of witnesses and low uptime requirement means that a server may be unable to get its checkpoint accepted by a sufficient threshold of witnesses on the first try.  Our runtime of 3.1\si{\second} thus reflects the server repeating its request until this happens.  In all settings, the runtime is completely dominated by network latency.  
The runtime in the collection phase is also dominated by network latency, and barely changes across the settings.  This is expected, as $\numreqsnap$ was chosen to ensure that clients would not have to repeat their request and for such low numbers of $\numreqsnap$ computation does not meaningfully change and is on the order of microseconds rather than milliseconds.  Bandwidth, on the other hand, grows linearly in $\numreqsnap$ and the number of witnesses, with each witness sending $176\cdot\numreqsnap$~bytes (112~bytes for each checkpoint and 64~bytes for the witness' signature on it).  This suggests a tradeoff between runtime and bandwidth in settings with low uptimes, with a lower number of witnesses requiring less bandwidth (as we see in Table~\ref{tab:latency}) but a potentially higher runtime if clients have to repeat their requests due to a lack of consensus on the first attempt (as we saw in Section~\ref{sec:gossip-liveness}).

\section{\mogname: A Verifiable Registry}\label{sec:registry}

\subsection{Building blocks: verifiable logs and maps}
\label{sec:maps}

Merkle trees can be used to instantiate a verifiable log, as described in Section~\ref{sec:gossip-defs}.  To achieve the append-only property, entries in the log are ordered chronologically.  This makes it inefficient to use the log as a verifiable \emph{map}, or key-value store, since performing a lookup would require a linear search in order to find the leaf corresponding to the key.  
To enable efficient lookups, entries in a map are typically instead ordered lexicographically.  This makes it difficult to efficiently prove that it is append-only, however, since newly added entries will not necessarily appear as its rightmost leaves.

To construct a verifiable map using a sparse Merkle tree~\cite{revocation-transparency,efficient-smt}, the tree is initialized so that all possible keys in the map are set to some default (null) value $\varepsilon$, and the leaves associated with keys are updated as they take on (real) value.  This means that the only operations in the map are updating its leaves and proving inclusion, which proves the value of a key lookup (against a public map root).  
If keys are hashed using SHA-256 before performing a lookup, then the map can support arbitrary keys and has $2^{256}$ leaves. \al{This whips very quickly over how to use a SMT as a Key$\leftrightarrow$Value dicts, maybe that's common/expected knowledge? } Performing operations on a tree of this size would normally be computationally infeasible, but if the set of keys in the map is sparse within the set of all possible keys then most leaves maintain the default value $\varepsilon$.  This in turn means their parents maintain the default value $H(\varepsilon\|\varepsilon)$, \gary{not sure if it's worth noting that this is vulnerable to multi-tree and inter-tree attacks. CONIKS defines a special value for empty branches.} and most internal nodes in the tree maintain some default value associated with their position in the tree (e.g., their height).  Thus, only the non-default values need to be maintained and revealed in inclusion proofs, which means if there are $N$ keys in the map then an inclusion proof consists of $O(\log N)$ hashes rather than $256$.

\subsection{Our construction: \mogname}

Intuitively, \mogname combines verifiable logs and maps to form a ``map of logs'', meaning a map from keys to append-only logs containing the values that have been associated with this key over time.  To keep track of the versions of the map, its evolving roots are appended to a \emph{map root log} (MRL).  
As discussed above, the server can efficiently prove that the two logs are append-only and can efficiently prove the validity of lookups in the map.  We prove in Theorem~\ref{thm:oscillation} that this combination ensures the security of the overall registry.

More formally, \mogname is comprised of three data structures: a table $\mtable$, a verifiable map $\mmap$, and a log $\mrl$.  The table is a simple lookup table, where (arbitrary) keys map to a list of values.  The keys in the map are the same as in the table, but the values are logs, which we refer to as \emph{leaf} logs and denote using $\leaflog$.  These leaf logs contain all historical values associated with the key, with the current value in the rightmost leaf.  The log $\mrl$ contains map roots in chronological order; i.e., the $i$-th leaf is the root $\roothash_{\mmap,i}$ representing the $i$-th version of the map.  The hash contained in a checkpoint is the root hash of the MRL, which we denote by $\roothash_\mrl$.  A history $\hist$ can be concisely represented by the root and size of the leaf log containing its entries, and the checkpoint at which it was obtained, meaning $\histrepr = (\snap,\roothash_\leaflog,\ell_\leaflog)$. 
Formal specifications of \mogname's algorithms are in Figure~\ref{fig:algs} in Appendix~\ref{sec:formal-registry-algs}.  We provide informal descriptions below.

\begin{sarahlist}

\item[$\lookup$] looks up the history $\hist$ associated with a key $\key$, and in particular the most recent value $\val$ in $\hist$ (we abuse notation slightly and refer to this as $\max(\hist)$).  It must then prove that $\val$ is properly stored, which means providing a path from $H(\val)$ all the way up to the MRL root $\roothash_\mrl$.  This is done in three parts: (1) an inclusion proof $\pi_\leaflog$ of $H(\val)$ in the leaf log, (2) an inclusion proof $\pi_\mmap$ of the leaf log root $\roothash_\leaflog$ in the map, and (3) an inclusion proof $\pi_\mrl$ of the map root $\maproot$ in the MRL.

\item[$\verlookup$] verifies the inclusion proof output by $\lookup$, forming its own hash $H(\val)$ in doing so.  This involves first re-computing the roots of the leaf log and map using $\pi_\leaflog$ and $\pi_\mmap$, as described in Section~\ref{sec:compact} (and in particular Algorithm~\ref{alg:ver-incl}), and then checking inclusion of the computed map root against the (known) MRL root using $\pi_\mrl$.  If the compact ranges used in the leaf log and MRL are only to the left of the node then the proofs also show that (1) $H(\val)$ is the rightmost leaf in $\leaflog$, meaning $\val$ is the latest value, and (2) $\maproot$ is the rightmost leaf in $\mrl$, meaning it represents the latest map version.

\item[$\histlookup$] is similar to $\lookup$ in providing a path to the root $\roothash_\mrl$, but also provides the entries in the leaf log that have been added since the last time the client looked (as represented by $\histrepr_\old$).  This allows the client to re-compute the leaf log root directly, meaning the proof does not need to include $\pi_\leaflog$.

\item[$\verhistlookup$] re-computes the roots of the leaf log and map, using the new entries and $\histrepr_\old$ to compute the former.  It then verifies inclusion in the same way as in $\verlookup$.

\item[$\audit$] requires proving that the stored history was consistent in every intermediate version of the registry since the last time the client accessed its history.  To do this, it first runs $\histlookup$ for every version at which a new value $\val_j$ was appended to the key's history, which results in a proof $\pi_{\hist,j}$.  This proves consistency at those versions, but it also needs to prove that the leaf log didn't change in all the other intermediate versions.  For these, it must (1) prove inclusion of the same leaf log with respect to their (differing) map roots, and (2) prove that these intermediate map roots are the only values that were appended to the MRL.  We can capture this in the following relation $R$.

\begin{align*}
x = &(\roothash_\leaflog, \snap, n), w = (\pi_\mrl,\{\roothash_{\mmap,i},\pi_{\mmap,i}\}_{i=1}^n) \\
&~|~\verincl(\roothash_{\mmap,i},\roothash_{\leaflog},\pi_{\mmap,i})~\forall~i\in[n]~\land \\
&~~~\verincl(\snap,\{\roothash_{\mmap,i}\}_i,\pi_\mrl)
\end{align*}

The server then proves this relation for each changed $\roothash_{\leaflog,j}$, in a way we describe in Section~\ref{sec:relation}, which yields a proof $\pi_{\mathsf{btwn},j}$.  It then returns all of the added values $\val_j$ and their corresponding proofs $\pi_{\hist,j}$ and $\pi_{\btwn,j}$.

\item[$\veraudit$] checks the history at each version at which a value $\val_j$ was added using the proof $\pi_{\hist,j}$, and then checks that the leaf log wasn't changed in the intermediate versions following this (until the next value was added, or until the current version of the registry) using the proof $\pi_{\mathsf{btwn},j}$.  It also uses the values $n_j$ to check that the proofs represent the right number of intermediate map roots, in terms of the different sizes of the MRL.

\gary{Would $\veraudit$ be clearer if phrased in terms of applying $\verhistlookup$ to every version of the map?, or am I misunderstanding what $\veraudit$ is doing}

\end{sarahlist}

\newcommand\registryalgs{
\begin{figure*}
\centering
\begin{tabular}{ll}
\begin{tabular}[t]{l}
\underline{$\lookup(\dict,\key)$}\\
$\hist\gets\dict[\mtable][\key]$\\
$\val\gets\max(\hist)$ \\
$\leaflog\gets\dict[\mmap][\key]$\\
$h_\val\gets \max(\leaflog)$\\
$\pi_\leaflog\gets\proveincl(\leaflog,h_\val)$\\
$\pi_\mmap\gets\proveincl(\mmap,\roothash_\leaflog)$ \\
$\pi_\mrl\gets\proveincl(\mrl,\maproot)$\\
return $\val,\pi=(\pi_\leaflog,\pi_\mmap,\pi_\mrl)$\\
~\\
\end{tabular}
&
\begin{tabular}[t]{l}
\underline{$\verlookup(\snap,\key,\val,\pi)$}\\
$(\pi_\leaflog,\pi_\mmap,\pi_\mrl)\gets\pi$\\
$\roothash_\mrl\gets\snap[\roothash]$ \\
$\roothash_\leaflog\gets\rangerootalg(\mergealg(\pi_\leaflog,H(\val)))$ \\
$(\pi_{\mmap,\ell},\pi_{\mmap,r})\gets\pi_\mmap$ \\
$\roothash_\mmap\gets\rangerootalg(\mergealg(\pi_{\mmap,\ell},\roothash_\leaflog,\pi_{\mmap,r}))$ \\
return $\verincl(\roothash_\mrl,\maproot,\pi_\mrl)\land (|\pi_\mrl| = 1)$
\end{tabular}
~\\
\begin{tabular}[t]{l}
\underline{$\histlookup(\dict,\key,(\snap,\roothash_{\leaflog,\old},\ell_{\leaflog,\old}))$}\\
$\hist\gets\dict[\mtable][\key]$ \\
$(\val,\pi_\leaflog,\pi_\mmap,\pi_\mrl)\gets\lookup(\dict,\key)$\\
$\entries\gets\dict[\leaflog_\key](\ell_{\leaflog,\old},\ell_\leaflog)$\\
return $\entries,\pi=(\pi_\mmap,\pi_\mrl)$
\end{tabular}
&
\begin{tabular}[t]{l}
\underline{$\verhistlookup(\snap,\key,\histrepr_\old,\hist_\new,\pi)$}\\
$(\pi_\mmap,\pi_\mrl)\gets\pi$\\
$(\snap,\roothash_{\leaflog,\old},\ell_{\leaflog,\old})\gets \histrepr_\old$ \\
$\roothash_\leaflog\gets \append(\roothash_{\leaflog,\old},\entries)$ \\
$(\pi_{\mmap,\ell},\pi_{\mmap,r})\gets\pi_\mmap$ \\
$\roothash_\mmap\gets\rangerootalg(\mergealg(\pi_{\mmap,\ell},\roothash_\leaflog,\pi_{\mmap,r}))$ \\
return $\verincl(\roothash_\mrl,\maproot,\pi_\mrl) \land (|\pi_\mrl| = 1)$\\
~\\
\end{tabular}
~\\
\begin{tabular}[t]{l}
\underline{$\audit(\dict,\key,(\snap,\roothash_{\leaflog,\old},\ell_{\leaflog,\old}))$}\\
$\{\val_j\}_{j=1}^{N_\key}\gets\dict[\leaflog_\key](\ell_{\leaflog,\old},\ell_\leaflog)$\\
$\roothash_{\leaflog,0}\gets\roothash_{\leaflog,\old}$ \\
for all $j\in[N_{\key}]$: \\
\quad $h_{\leaflog,j}\gets\append(\val_j,\roothash_{\leaflog,j-1})$ \\
\quad $\pi_{\hist,j}\gets\histlookup(\dict_j,\key,(h_{\leaflog,j-1},\ell_{\leaflog,j-1}))$ \\
\quad $n_j\gets i_j - i_{j-1} + 2$ \\
\quad $x_j\gets (\snap_j,\roothash_{\leaflog,j},n_j)$ \\
\quad $\{\roothash_{\mmap,i}\}_{i=i_{j-1}+1}^{i_j-1} \gets \dict[\mrl](\ell_{j-1},\ell_j)$ \\
\quad $\pi_{\mrl}\gets\proveincl(\mrl,\{\roothash_{\mmap,i}\}_i)$ \\
\quad for all $i\in[i_{j-1}+1, i_j-1]$: \\
\quad\quad $\pi_i\gets\proveincl(\mmap_i, \roothash_{\leaflog,j})$ \\
\quad $w_j\gets (\{\roothash_{\mmap,i}, \pi_i\}, \pi_\mrl)$ \\
\quad $\pi_{\mathsf{btwn},j}\gets\prove(R,x_j,w_j)$ \\
return $\{\val_j, \pi_{\hist,j}, \pi_{\mathsf{btwn},j}, \snap_j\}_{j\in[N_\key]}$
\end{tabular}
&
\begin{tabular}[t]{l}
\underline{$\veraudit(\snap,\key,\histrepr_\old,\hist_\new,\pi)$}\\
$\{\val_j, \pi_{\hist,j}, \pi_{\mathsf{btwn},j}, \snap_j\}_j \gets \pi$ \\
$(\snap_\old,\roothash_{\leaflog,\old},\ell_{\leaflog,\old})\gets \histrepr_\old$ \\
$\roothash_{\leaflog,0}\gets\roothash_{\leaflog,\old}$ \\
$\snap_{N_\key+1}\gets \snap$ \\
for all $j\in[N_\key]$: \\
\quad $\roothash_{\leaflog,j}\gets\append(\val_j,\roothash_{\leaflog,j-1})$ \\
\quad $b_{\hist,j}\gets \verhistlookup(\snap_j,\key,\val_j,\pi_{\hist,j})$ \\
\quad $n_j\gets \snap_{j+1}[\ell] - \snap_j[\ell]$ \\
\quad $x_j\gets(\snap_j,\roothash_{\leaflog,j},n_j)$ \\
\quad $b_{\mathsf{btwn},j}\gets\verify(R,x_j,\pi_{\mathsf{btwn},j})$ \\
return $b_{\hist,1}\land b_{\mathsf{btwn},1}\land\cdots\land b_{\mathsf{btwn},N_{\key}}$ \\
\end{tabular}
\end{tabular}
\caption{The algorithms run by the server (on the left-hand side) and the client (on the right-hand side) for different types of lookup queries.}
\label{fig:algs}
\end{figure*}
}

We now argue for the security of \mogname, in terms of its ability to prevent oscillation attacks.  While we do not consider privacy in this paper, we believe that incorporating a verifiable random function into \mogname would be effective in preventing clients from learning information about other entries in the registry, as has been done before~\cite{coniks,seemless}.

\begin{theorem}\label{thm:oscillation}
If (1) the hash function $H$ is collision resistant, (2) the argument system for the relation $R$ satisfies knowledge soundness, and (3) there is a gossip protocol in place for the MRL that resists split-view attacks, then oscillation attacks (Definition~\ref{def:oscillation}) are not possible in \mogname.
\end{theorem}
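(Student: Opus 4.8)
The plan is a reduction. From any PT adversary $\A$ winning $\oscillation{\A}{\secp}$ I would build PT adversaries $\B_1$ against collision resistance of $H$, $\B_2$ against knowledge soundness of the argument for $R$, and $\B_3$ against resistance to split-view attacks on the MRL, and argue
\[
\advoscillation{\A}{\secp}\;\le\;\advcr{\B_1}{\secp}+\mu(\secp)+\advsplitview{\B_3}{\secp},
\]
where $\mu$ is a negligible term (a union bound over the polynomially many audit segments) accounting for failures of the $R$-extractor invoked by $\B_2$; as every term on the right is negligible, so is $\A$'s advantage. All three reductions run $\A$ internally, faithfully relaying its $\oupdate$ queries to their own game so that the list $B$ is simulated perfectly, and then post-process $\A$'s output $(\{\snap_i\}_{i=0}^2,\key,\val,\hist_\old,\hist,\pi_1,\pi_2)$, assuming all five winning conditions of Definition~\ref{def:oscillation} hold.

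The first ingredient is what a verifying lookup certifies. Parsing $\pi_1=(\pi_\leaflog,\pi_\mmap,\pi_\mrl)$ and reading $\verlookup$ in Figure~\ref{fig:algs}, $\verlookup(\snap_1,\key,\val,\pi_1)=1$ exhibits a leaf-log root $\roothash_\leaflog^{\ast}=\rangerootalg(\mergealg(\pi_\leaflog,H(\val)))$ whose rightmost leaf is $H(\val)$, a map root $\maproot^{\ast}$ computed from $\pi_\mmap$ in which $\key$ maps to $\roothash_\leaflog^{\ast}$, and an MRL-inclusion proof of $\maproot^{\ast}$ against $\snap_1[\roothash]$ with $|\pi_\mrl|=1$, so $\maproot^{\ast}$ is the \emph{last} leaf of $\snap_1$'s MRL (at index $\version(\snap_1)$). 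In words: in $\snap_1$'s view of the registry, the current value of $\key$ is $\val$.

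The second, and technically central, ingredient is a lemma on what a verifying audit certifies. Parsing $\pi_2=\{\val_j,\pi_{\hist,j},\pi_{\btwn,j},\tau_j\}_{j=1}^{N}$ (with $\tau_0:=\snap_0$ and $\tau_{N+1}:=\snap_2$), running the $R$-extractor on each $\pi_{\btwn,j}$ to obtain a witness $w_j$, and reading off the leaf-log roots $\roothash_{\leaflog,j}=\append(\val_j,\roothash_{\leaflog,j-1})$ recomputed inside $\veraudit$, one obtains (except with probability $\le\mu$) that for every MRL index $i$ with $\version(\snap_0)\le i\le\version(\snap_2)$ the map root at index $i$ of $\snap_2$'s MRL maps $\key$ to a leaf log whose list of values is a prefix of $\hist_\old$ followed by $\val_1,\ldots,\val_N$, and hence whose latest value lies in $\{\max(\hist_\old)\}\cup\hist$. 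Establishing this lemma is the step I expect to be the main obstacle: one must check that the $\verhistlookup$ checks and the extracted $R$-witnesses exactly tile the index range $[\version(\snap_0),\version(\snap_2)]$ with no gap (this is the role of the counters $n_j=\tau_{j+1}[\ell]-\tau_j[\ell]$, and uses $\version(\snap_2)\ge\version(\snap_1)\ge\version(\snap_0)$), including the delicate boundary at $\version(\snap_0)$; that the ``left-only'' compact ranges with $|\pi_\mrl|=1$ pin the relevant map roots to the claimed MRL positions; and that the recomputed leaf-log roots agree across segment boundaries. Wherever two recomputed history-tree or sparse-Merkle-tree hashes are forced equal yet could a priori differ, $\B_1$ reads off the colliding sibling pair (this also subsumes the standard ``two conflicting entries under one node forces a collision'' argument).

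To finish, I would bridge the two views using split-view resistance. In $\B_3$'s simulation it submits $(\mathrm{idx}(\snap_1),\maproot^{\ast},\pi_\mrl)$ from $\pi_1$ to its $\ocheck$ oracle, adding $(\snap_1,\maproot^{\ast},\pi_\mrl)$ to $A$, and takes as its candidate split list $\entries^{\ast}$ the leaves of $\snap_2$'s MRL recovered from the audit lemma. If $\maproot^{\ast}\notin\entries^{\ast}$, then since $\snap_2\in B$, $\version(\snap_2)\ge\version(\snap_1)$ and $\vercommit(\snap_2,\entries^{\ast})$, the reduction $\B_3$ wins the split-view game. Otherwise $\maproot^{\ast}$ occurs in $\snap_2$'s MRL; being the tip of $\snap_1$'s MRL and, by hypothesis~(3) (split-view resistance applied to this shorter prefix), $\snap_1$'s MRL being a prefix of $\snap_2$'s, it occurs at index $\version(\snap_1)\in[\version(\snap_0),\version(\snap_2)]$, so the audit lemma applies there and shows that in $\maproot^{\ast}$ the key $\key$ maps to a leaf log whose latest value lies in $\{\max(\hist_\old)\}\cup\hist$. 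But the lookup certifies that in $\maproot^{\ast}$, $\key$ maps to the leaf log with root $\roothash_\leaflog^{\ast}$, whose latest value is $\val$. If the two leaf-log roots claimed for $\key$ under $\maproot^{\ast}$ differ, their sparse-Merkle inclusion paths collide and $\B_1$ wins; if they agree, their latest values coincide, so $\val\in\{\max(\hist_\old)\}\cup\hist$, contradicting the winning condition $\val\ne\max(\hist_\old)\wedge\val\notin\hist$. Hence every win of $\A$ is charged to one of $\B_1,\B_2,\B_3$, which yields the displayed bound and the theorem.
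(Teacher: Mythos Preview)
Your proposal follows the same reduction strategy as the paper: extract the $R$-witnesses via knowledge soundness, use Merkle-tree binding to pin down the leaf-log and map contents at every intermediate version, and invoke split-view resistance for global MRL consistency. The organizational difference is that the paper decomposes into five adversaries, separately naming lookup security of the map (Definition~\ref{def:lookup}) and membership security of the MRL (Definition~\ref{def:membership-log}) as intermediate targets, and does a four-event case split ($E_\ks,E_\leaflog,E_\mmap,E_\mrl$); you fold all Merkle-tree binding directly into collision resistance and package the per-version consequences of a verifying audit into a single ``audit lemma.'' Both routes rest on the same ideas; yours is more parsimonious, the paper's makes the case structure more explicit.

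One point in your split-view reduction needs care. You take $\entries^\ast$ to be ``the leaves of $\snap_2$'s MRL recovered from the audit lemma,'' but the audit and the extracted $R$-witnesses only tile indices in $[\version(\snap_0),\version(\snap_2)]$; they do not hand you the leaves below $\version(\snap_0)$, which you need for $\vercommit(\snap_2,\entries^\ast)$ in Definition~\ref{def:split-view}. The paper sidesteps this by appealing to an honest mirror that holds the full list $\entries$ (this is where its $E_\mrl$/$\lnot E_\mrl$ split and the $\B_4,\B_5$ adversaries come in). Relatedly, your sentence ``by hypothesis~(3), $\snap_1$'s MRL is a prefix of $\snap_2$'s, so $\maproot^\ast$ occurs at index $\version(\snap_1)$'' treats split-view resistance as a consistency statement rather than a game to reduce to. The fix is simple and in fact tightens your argument: you do not need the exact index. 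If $\maproot^\ast$ equals \emph{any} of the map roots extracted for the range $[\version(\snap_0),\version(\snap_2)]$, your audit lemma already applies at that position and yields the contradiction; otherwise you are in the paper's $\lnot E_\mmap$ branch, and you should mimic its membership/split-view split (using a mirror for the full $\entries$) rather than try to manufacture $\entries^\ast$ from the audit alone.
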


A proof of this theorem can be found in Appendix~\ref{sec:registry-proof}.  Intuitively, the collision resistance of the hash function implies membership security of the MRL and leaf log (Definition~\ref{def:membership-log}) and lookup security of the map (Definition~\ref{def:lookup}).  It also makes it impossible to prove that something is the rightmost element in a log if it isn't.  We can thus work our way up, starting with the leaf log and arguing that if an adversary can provide valid inclusion proofs for two conflicting logs then by collision resistance the logs must have inconsistent roots.  Similarly, if they can provide valid inclusion proofs of these two inconsistent leaf log roots in the map, then by lookup security the maps must also have different roots.  If the adversary can provide a valid inclusion proof of this ``different'' map root in the MRL,
then by membership security and knowledge soundness the MRL roots must be inconsistent.  If the adversary can get the clients to accept two inconsistent MRL roots, finally, it has successfully carried out a split-view attack.

\subsection{Proving the audit relation}\label{sec:relation}

We consider two approaches for proving the relation $R$ required for auditing: one that involves simply giving out the witness and one that uses SNARKs to prove knowledge of it.

\subsubsection{Providing the witness}\label{sec:compression}
The na{\"i}ve option for proving this relation is to just provide the witness directly; i.e., to have each proof $\pi_{\btwn,j}$ just be the witness $w_j$.  Verification can then consist of checking each intermediate proof (i.e., running $\verincl(\roothash_{\mmap,i},\roothash_{\leaflog,j},\pi_{\mmap,i})$) and checking the range inclusion proof $\pi_{\mrl,j}$.  
This satisfies knowledge soundness (the extractor just outputs the proof), assuming the membership/lookup security of the MRL and the map, but is clearly inefficient: the combined size of the witnesses is $O(N_\mrl\cdot \log(N))$.  This cost, however, can be reduced.

First, we observe that if the verifier sees every intermediate map root then there is no need to send the inclusion proof $\pi_\mrl$, as they can just incorporate the map roots into the hash in $\snap_{j-1}$ to ensure they get the hash in $\snap_j$.  Second, each map inclusion proof contains $O(\log N)$ hashes, but it is highly unlikely that each of these hashes will change between two adjacent versions.  
We can thus compress the set $\{\pi_i\}_i$ by giving only the hashes that have changed.  
To see how many hashes this is, consider that there are $M$ updates to the map between versions.  For each update $m$, we can define a random variable $X_m$ denoting the depth of the node where the paths to the root from $\roothash_{\leaflog,j}$ and the updated key $m$ intersect.  Because map keys are cryptographic hashes, the variables $X_m$ are independent and identically distributed.  We consider two additional random variables: $C_M = |\{X_m\}_{m=1}^M|$, representing the number of distinct depths at which the $M$ paths intersect the path from $\roothash_{\leaflog,j}$, and $D_M = \max_{m\in[M]}\{X_m\}$, representing the deepest modified hash of the inclusion proof of $\roothash_{\leaflog,j}$.  It is clear that $C_M \leq D_M$, and that $E[C_M]$ is the expected number of hashes that would change.

Each variable $X_m$ follows the geometric distribution with parameter $p = 1/2$.  Intuitively, this is because at each depth we have probability $1/2$ of ``exiting'' the path down from the root, and $X_m$ corresponds to the number of Bernoulli trials until this ``exit'' occurs.  To figure out the expected maximum depth after $M$ updates, we observe that $\frac{1}{\lambda}\sum_{k=1}^M \frac{1}{k} \leq E[D_M] \leq 1 + \frac{1}{\lambda}\sum_{k=1}^M \frac{1}{k}$, where $\lambda = -\ln(1-p)$~\cite{eisenberg}.  In our case $p=1/2$ so $\lambda$ is $-\ln(1/2)=\ln(2)$.  We can also use a formula for the $M$-th harmonic number $H_M = \sum_{k=1}^M \frac{1}{k}$, which states that $H_M = \ln(M) + \gamma + o(1)$, where $\gamma$ is the Euler-Mascheroni constant.  Putting this together, we get that
\begin{align*}
E[D_M] &= O(1) + \frac{1}{\ln(2)}H_M \\
&= O(1) + \frac{1}{\ln(2)}(\ln(M) + \gamma + o(1)) \\
&= O(1) + \log_2(M).
\end{align*}

\begin{figure}
\centering
\includegraphics[width=0.7\linewidth]{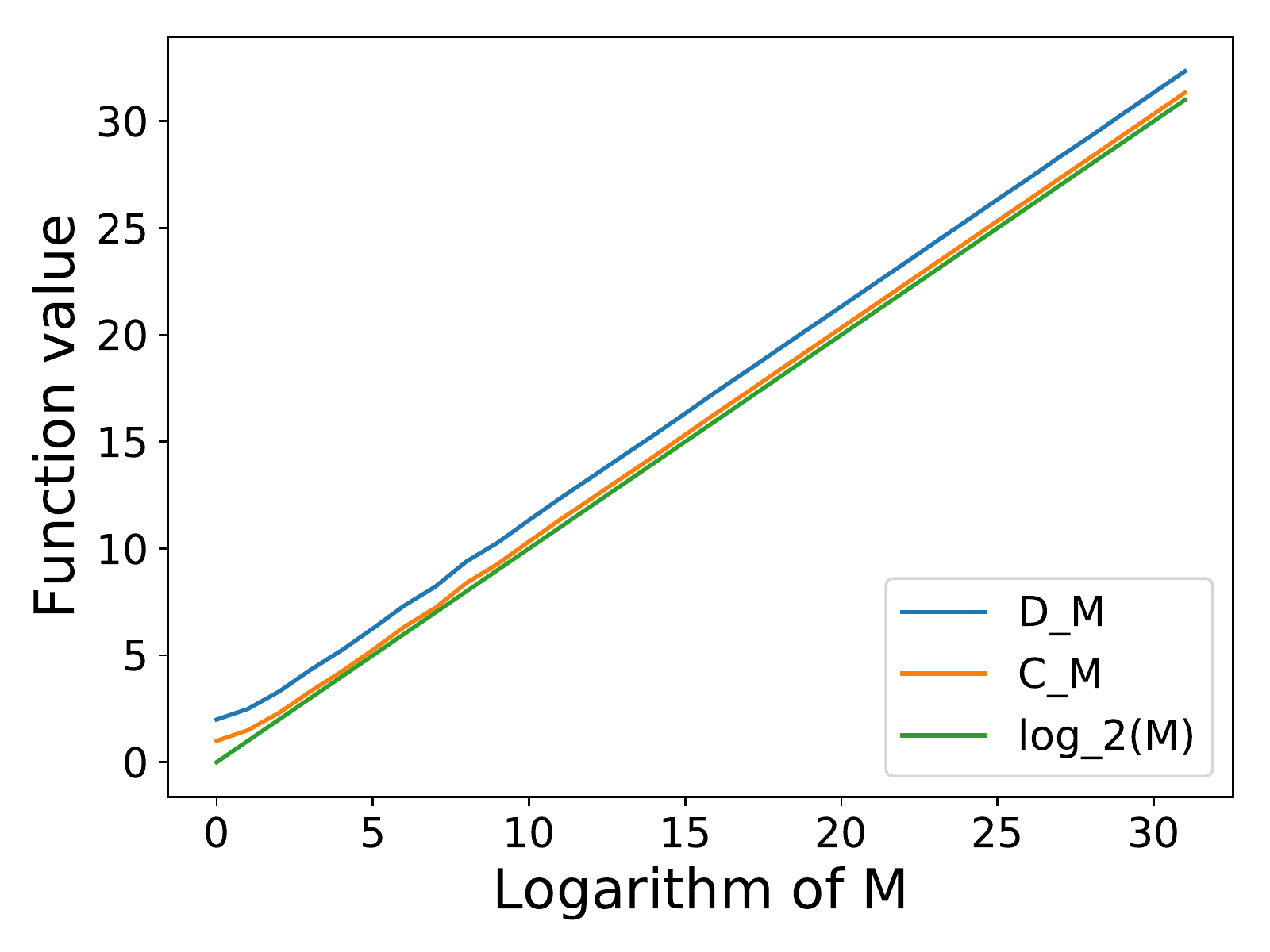}
\caption{The functions $\log_2(M)$, $E[C_M]$, and $E[D_M]$.}
\label{fig:overlap}
\end{figure}

As we can see in Figure~\ref{fig:overlap}, this $O(1)$ term is not hiding a large constant: $E[C_M]$ closely approximates $\log_2(M)$ (within a factor of $1.028$), and $E[D_M] = E[C_M]+1$.  This means that each proof $\pi_i$ needs to contain only $O(\log(M))$ hashes, rather than $O(\log(N))$, so the combined size of the witnesses can be compressed to $O(N_\mrl\cdot \log(M))$.  This is still substantial but we explore its concrete cost in Section~\ref{sec:audit}.

\subsubsection{Using SNARKs}

The more advanced option for proving $R$ would be to use a SNARK (Section~\ref{sec:snark}).  This would yield a constant-sized proof that could be verified in constant time, making the total proof size $O(N_\key)$.\footnote{Getting this bound requires also using a SNARK to prove knowledge of each of the proofs $\pi_{\hist,j}$,
since otherwise these are logarithmic in size.}  
Despite the better asymptotics, we do not currently view the use of SNARKs in this application as practical, as it would incur significant computational overhead for the prover; this is true even for so-called STARKs~\cite{bulletproofs,stark-crypto}, where the size of the proof is also typically logarithmic in the size of the witness.  Moreover, the settings we consider do not have a natural set of parties to fit the model of updatable SNARKs~\cite{updatable-crypto}, so we would need to rely instead on traditional SNARKs~\cite{Groth16}, which would mean introducing a trusted setup.  Nevertheless, we leave it as interesting future research to explore the practicality of SNARKs in this and other Merkle tree-based applications.

\subsection{Implementation and evaluation}
\label{sec:registry-efficiency}

We implemented \mogname in Go, drawing on our own open-source implementation of compact ranges and (sparse) Merkle trees.\footnote{\ifsubmission{A link is not provided in the submission in order to maintain anonymity.}\else{\url{https://github.com/google/trillian/tree/master/merkle/compact}}\fi}  Our server benchmarks were run in a production environment in which 
each job is run in a worker pool that can be elastically resized.  Our client benchmarks were run on a laptop with an Intel Core i5 2.6\si{\giga\hertz} CPU and 8\si{\giga\byte} of RAM.  It is not possible to compare our performance to previous work throughout, due to the lack of published performance statistics or source code, the different computational environment, and the usage of VRFs (which we do not incorporate) by works like CONIKS~\cite{coniks} and SEEMless~\cite{seemless}, but where possible we do present comparisons with specific measurements.

When evaluating efficiency, we use $N$ to denote the number of entries in the registry (and thus the map), $N_\mrl$ for the number of entries in the MRL, $N_\key$ for the average number of entries in a leaf log, and $M$ for the number of updated entries per version.  %
For $N$, we aim to tolerate up to a billion entries.  For $N_\mrl$ and $N_\key$, we consider the respective rates $r_\mrl$ and $r_\key$ at which new versions and values are created.  
We consider a version of \mogname whose size reflects two continuous years of operation, and pick parameters to fit the three different settings introduced in Section~\ref{sec:gossip-performance} (and defined in full in Appendix~\ref{sec:params}).  
In the aggressive setting, we create a new version once per second and have users update their entries once per hour, so (loosely) bound $N_\mrl$ by $2^{26}$ and $N_\key$ by $2^{15}$.  
In the KT setting, we create a new version once per second and have users update their entries once per month (as they do so only when they replace a device), so $N_\mrl < 2^{26}$ and $N_\key < 2^5$.  
Finally, in the CT setting we create a new version once per day and have users update their entries once per hour, so $N_\mrl < 2^{10}$ and $N_\key < 2^{15}$.  If every key is updated at the rate at which new versions are created then we have that $M = N$.  More generally, we have that $M = N \cdot r_\key / r_\mrl$. 

\subsubsection{Append}\label{sec:append}

Rather than running a loop for all new key/value pairs, we implemented $\append$ as an Apache Beam pipeline.  In particular, we observe that any subtree of a Merkle tree is itself a Merkle tree, so this process can be made parallel by splitting the tree according to height and using the roots of lower-level subtrees (computed in earlier phases of the pipeline) as leaves in upper-level subtrees, until eventually we output the root of the whole tree in the final phase.  \ifsubmission{This code will be released and documented in an open-source repository.}\else{This code is available in the open-source Trillian repository.\footnote{\url{https://github.com/google/trillian/tree/master/experimental/batchmap}}}\fi  
To measure the cost, we take a map of size $N$ and add 100,000 new entries to it (or $N/4$ new entries, whichever is smaller) and update $N/720$ entries.  This corresponds to our KT setting but with slower versioning, as it reflects an average user updating their entries once per month and new versions of the map being created once per hour.%

\begin{figure}
\centering
\includegraphics[width=0.7\linewidth]{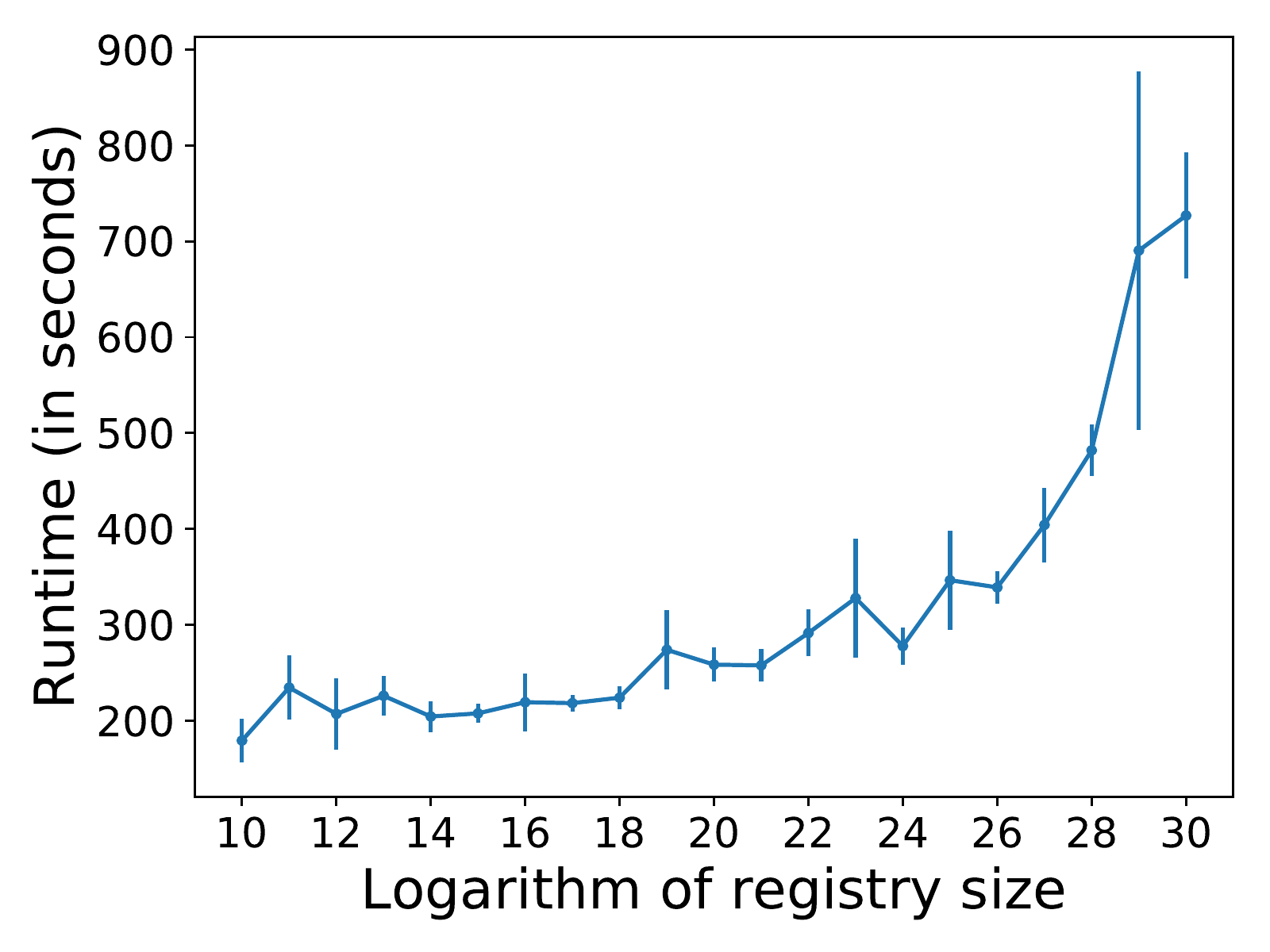}
\caption{The runtime to update the map at the core of \mogname, measured in seconds and averaged over 5 runs.}
\label{fig:append}
\end{figure}

Figure~\ref{fig:append} shows the cost of updating \mogname.  In terms of parallelism, all of our benchmarks for map sizes under $2^{24}$ used under 10 workers; for map sizes under $2^{29}$ they used under 50 workers, and for maps of size $2^{29}$ and $2^{30}$ they used up to 147 workers.  Our pipeline approach imposes more orchestration overhead than one would expect from a purely iterative approach, so it takes longer for smaller map sizes.  On the other hand, it scales to large map sizes, with even a map of $2^{30}$ entries taking under 12 minutes to update on average.  This suggests that this approach is best suited to settings in which versions are produced at, for example, an hourly rate, or in which each version represents a large number of changes.  
As two points of comparison, appending one entry to a tree of size $2^{13}$ took $5$ seconds in an append-only authenticated dictionary (AAD)~\cite{CCS:TBPPTD19}, and inserting 1000 entries into a tree of size 10 million (but updating none) took $2.6$ seconds in CONIKS~\cite{coniks}.

\subsubsection{Lookups}

\begin{figure*}[t]
\centering
\begin{subfigure}[t]{0.32\textwidth}
\includegraphics[width=\linewidth]{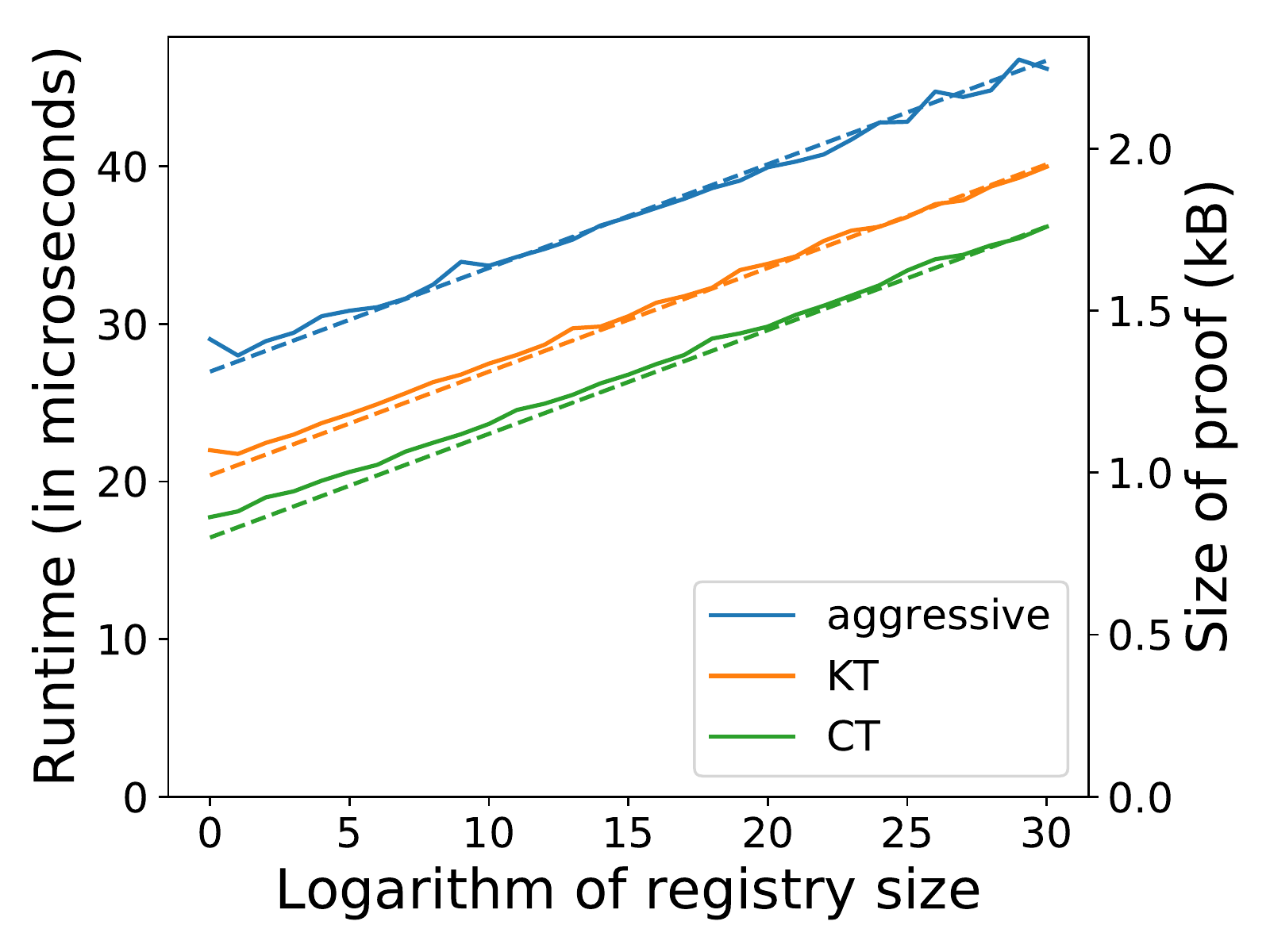}
\caption{$\verlookup$, with runtime in microseconds and proof size in kilobytes.}
\label{fig:lookup}
\end{subfigure}
~
\begin{subfigure}[t]{0.32\textwidth}
\includegraphics[width=\linewidth]{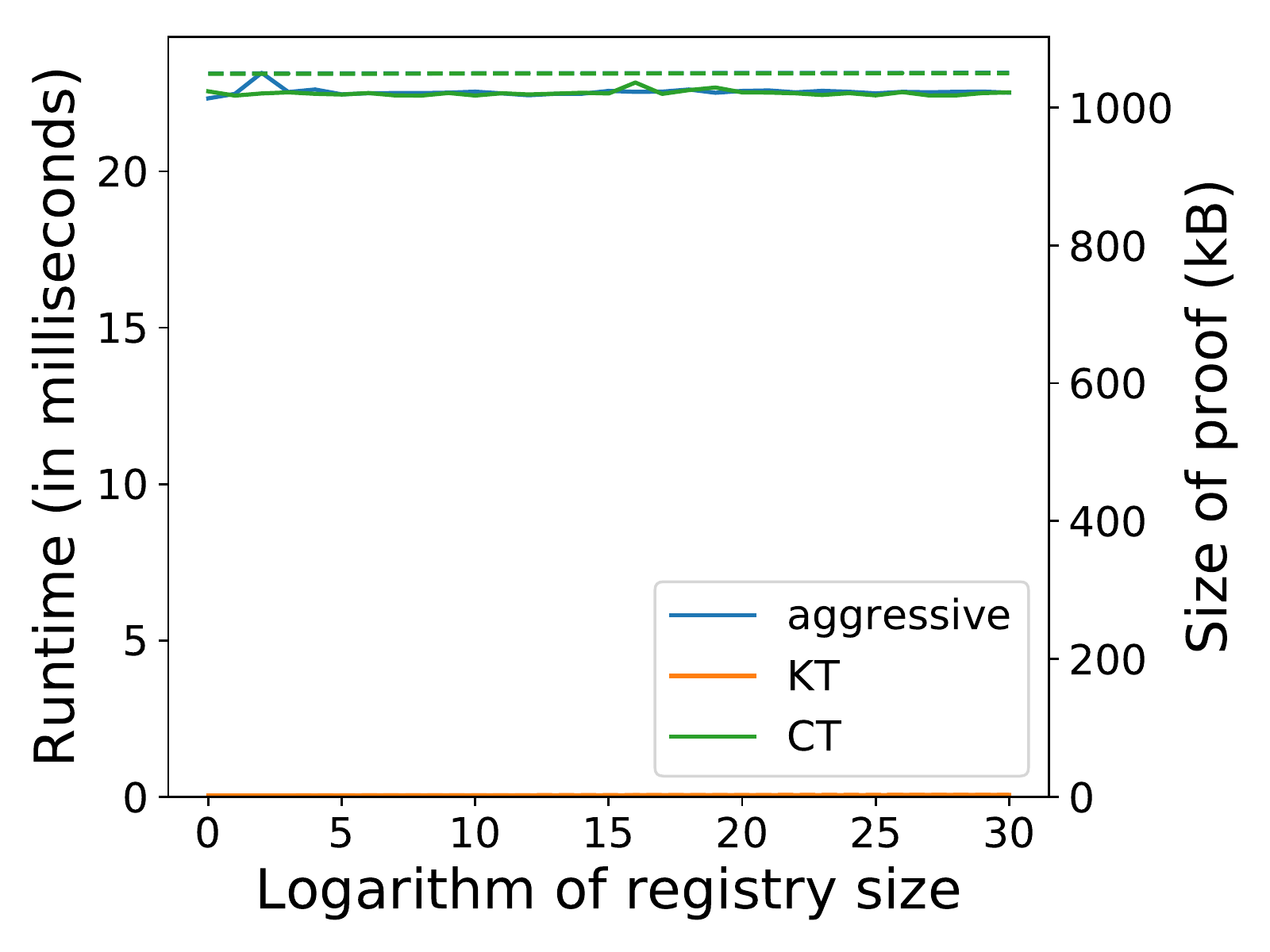}
\caption{$\verhistlookup$, with runtime in milliseconds and proof size in kilobytes.}
\label{fig:hist-lookup}
\end{subfigure}
~
\begin{subfigure}[t]{0.32\textwidth}
\includegraphics[width=\linewidth]{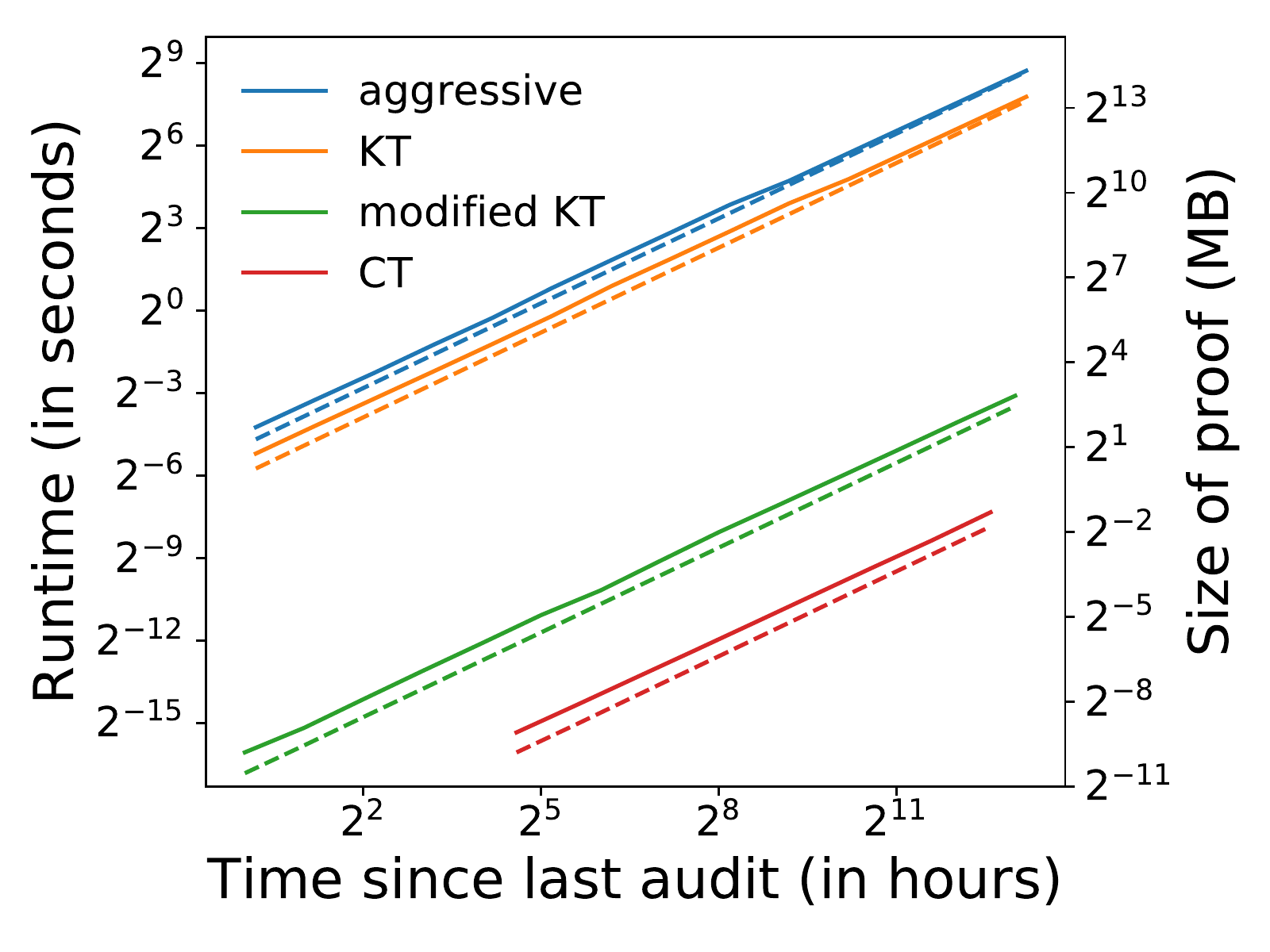}
\caption{$\veraudit$, with runtime in seconds and proof size in megabytes.}
\label{fig:audit}
\end{subfigure}
\caption{The average verification runtime (on the left) and proof size (on the right and using dashed lines) for our three verification algorithms in each of our scenarios, as a function of the registry size (for $\verlookup$ and $\verhistlookup$) and of the time since the last audit (for $\veraudit$).  The average was computed over a minimum of 45 runs for $\verlookup$ and $\verhistlookup$ and a minimum of 2 runs for $\veraudit$.}
\label{fig:lookups}
\end{figure*}

Figure~\ref{fig:lookup} shows the effect that the size of the registry ($N$) has on $\verlookup$, which is logarithmic (as expected).  Even in the aggressive setting, verification took only $46$\si{\micro\second} for a registry of size $2^{30}$, with the proof taking up $2.2$\si{\kilo\byte}.  In SEEMless~\cite[Section 7.2]{seemless}, these costs (excluding the VRF) are reported as over $200$\si{\micro\second} for a registry of 10 million entries.

Figure~\ref{fig:hist-lookup} shows the effect that $N$ has on the runtime of $\verhistlookup$ and the size of proofs.  Unsurprisingly, this cost is dominated not by $\log(N)$ but by the size of the leaf log ($N_\key$), as we benchmark the worst case in which a client starts from scratch; i.e., when $\hist$ contains every value.  In our aggressive and CT settings, in which $N_\key<2^{15}$, proofs still take less than $25$\si{\milli\second} to verify and are slightly more than $1$\si{\mega\byte}.  In the KT setting (barely visible at the bottom), in which $N_\key<2^5$, proofs take at most $58$\si{\micro\second} to verify and are at most $2.8$\si{\kilo\byte}.

\subsubsection{Audits}\label{sec:audit}

The cost of auditing is linear in the number of versions since the last audit, which makes the rate at which versions are produced and the time since the client's last audit the most important factors.
In addition to the regular KT setting, we thus add a ``modified KT'' setting that uses the versioning rate in Section~\ref{sec:append}; i.e., produces a new version every hour instead of every second.

Figure~\ref{fig:audit} shows the runtime of $\veraudit$ and the size of proofs as a function of the time since the most recent audit, ranging from one hour to one year (just over $2^{13}$ hours).  Unsurprisingly, we see much higher costs for the aggressive and KT settings, where versions are produced every second.
Similarly, we see the large effect that lowering the versioning rate has in the KT setting: after waiting a year it takes a client 3.6\si{\minute} to audit a 9.7\si{\giga\byte} proof in the regular setting, as compared to 116.4\si{\milli\second} to audit a 5.5\si{\mega\byte} proof in the modified one.  Given that $\append$ can be performed well within an hour even for large $M$ (as we saw in Section~\ref{sec:append}), this suggests that producing hourly versions is the better tradeoff, despite the fact that it does not benefit as much from the compression discussed in Section~\ref{sec:compression}.  
Finally, we see that bandwidth is the main obstacle to making $\veraudit$ practical, with the highest runtime on the order of minutes but the highest proof size on the order of tens of gigabytes.

\section{Related Work}

\subsection{Gossip protocols}

Chuat et al.\ proposed a gossip protocol for CT in which clients gossip only with web servers, who in turn gossip with auditors~\cite{ct-gossip}.  %
While this makes it harder to disrupt gossip and does not introduce new participants, it introduces privacy concerns in the form of web servers using gossip information to fingerprint and track clients~\cite[Section 10.5.4]{gossiping-in-ct}.  More crucially, it requires a change in a significant fraction of web servers in order to be effective.  Given the speed at which web servers adopt new technology~\cite{USENIX:FBKPBT17,IMC:KRAPVC18}, this makes it unlikely to work in the near future.  

In scenarios like KT that don't have web servers, CONIKS proposed having clients gossip with other clients or with trusted auditors~\cite{coniks}.  It isn't clear how clients would find each other, however, and client-to-client communication isn't scalable.  Some proposals solve this by using a blockchain to disseminate information to clients, which also enables the prevention rather than detection of split-view attacks~\cite{SP:TomDev17,contour,keybase-blockchain}.  This creates an external dependency, however, and furthermore one that is highly inefficient.  
Dahlberg et al.~\cite{dahlberg-aggregation} propose packet aggregation, which also prevents split-view attacks, but requires changing networks themselves.  This is another substantial change that is unlikely to happen in the near future.

More generally, systems such as PeerReview~\cite{peerreview,peerreview-tr} have considered how to reliably detect faulty nodes in distributed systems, based on tamper-evident logging~\cite{secure-timelines}.  These approaches were designed for peer-to-peer networks, rather than networks in which a large set of clients want to gossip about a small set of log operators.  If they were adapted to this setting, our gossip protocol would still offer the benefit of being provably secure, avoiding communication between any parties except clients and witnesses, and enabling the prevention of split-view attacks rather than their retrospective detection.

\subsection{Verifiable data structures}

Perhaps the most closely related verifiable data structure to \mogname is the idea of a verifiable log-derived map (VLDM)~\cite{continusec,usenix-wave}.  In a VLDM, a log contains data associated with a system.  This data can be used to (deterministically) populate a map, whose roots are stored in another log (like our MRL).  A global auditor then re-populates the map using the log data and checks that it gets the same versions of the map/MRL that the server has published.  To some extent, \mogname can be seen as a (provably secure) VLDM with the log sharded into leaf logs, which allows users to perform personal audits (which require $O(N_\mrl\cdot \log(M))$ computation, or even $O(N_\key)$ if using SNARKs, where $N_\mrl$ denotes the number of versions/epochs, $M$ the number of updates per version, and $N_\key$ the number of values per key) 
rather than relying on an entity to audit the entire data structure (which requires $O(N)$ computation, where $N$ denotes the number of entries in the registry and $N > N_\mrl$, $N > M$, and $N \gg N_\key$).  This also enables better detection of oscillation attacks, as oscillations for a single key may be easier to hide within a global log.

In terms of Key Transparency, CONIKS~\cite{coniks} was the first solution that allowed individual users to monitor their keys directly, but they need to rebuild their entire data structure in every version and do not provide a protocol for gossiping about signed tree roots; furthermore, they rely on global auditors to check consistency across different versions of the tree.  This limitation was addressed in EthIKS~\cite{ethiks}, which instead uses the Ethereum blockchain to prevent equivocation.  Similarly, Catena~\cite{SP:TomDev17} proposes storing checkpoints in the Bitcoin blockchain, and Keybase~\cite{keybase-blockchain} stores the root of its Merkle tree in the Stellar blockchain.  Keybase also proposes storing key histories at the leaves of its tree, but does so using a hash chain (which requires a cost of $O(N_\key)$ to verify) as opposed to a log (which requires a cost of $O(\log(N_\key))$).  SEEMless~\cite{seemless} relies on a persistent Patricia trie to maintain history, but needs a global auditor to check that each version is consistent with the previous one and again does not specify how gossip would be instantiated.  Relying on a global auditor means that individual clients do not have to perform audits themselves, in which case Mog and SEEMless have comparable costs ($O(N_\key + \log(N))$) vs.\ $O(N_\key\cdot \log(N))$), but in the absence of a global auditor performing a personal audit in Mog requires $O(N_\mrl\cdot\log(M))$ computation as compared to $O(N_\mrl\cdot M)$ for SEEMless.

More generally, the idea of storing historical values in a dictionary originated with the idea of a persistent authenticated dictionary (PAD)~\cite{isc2001}.  Crosby and Wallach improved on the efficiency of the original design~\cite{ESORICS:CroWal09}, but continued to use the traditional three-party model for authenticated data structures, in which a \emph{source} (the data author) is considered trusted but the \emph{directories} who answer data queries are not.  Pulls and Peeters consider an append-only PAD in the three-party model~\cite{balloon}, in which the source can only append keys but not remove or update old ones.  
Tomescu et al.\ were the first to consider the notion of an append-only authenticated dictionary (AAD) in the two-party model~\cite{CCS:TBPPTD19}, in which a global audit can be performed with only $O(\log N)$ computation.  This is at the cost of increasing the lookup verification time to $O(\log^2 N)$ and the append time to $O(\log^3 N)$, however, and furthermore they do not consider storing historical values and require a trusted setup.  We view it as an interesting open question to see if their techniques can be used to achieve sublinear personal audits, with or without a trusted setup.

\section{Conclusions and Open Problems}

This paper presented a gossip protocol for verifiable logs and a verifiable registry, \mogname, that allows users to perform efficient personal audits rather than relying on a global auditor.  Both of our protocols are provably secure assuming the existence of collision-resistant hash functions, and are performant even under significant loads.  Our experiments suggest that, as the size of registries grows to handle real deployments, it becomes necessary to slow down the rate at which new versions are created.  This further suggests that systems relying on this type of verifiable registry must be able to tolerate retrospective discovery of bad events (e.g., the non-inclusion of public keys or certificates) as opposed to requiring their proactive prevention.  We leave an exploration of this tradeoff as interesting future work.

{\footnotesize
\bibliographystyle{abbrv}
\interlinepenalty=10000

}

\appendix

\section{Pseudocode and Proofs for Compact Ranges}
\label{sec:compact-app}

In this section, we provide pseudocode specifications of the compact range algorithms described in Section~\ref{sec:compact}, along with a proof that Algorithm~\ref{alg:range-to-root} produces the correct root.

\compactalg

\compactmergealg

\compactrangerootalg

In Algorithm~\ref{alg:range-to-root}, \textsc{SortedHashStack} returns a stack containing the hashes of the nodes in the compact range in height order; i.e., with the hashes of the nodes at the lowest height at the top of the stack and the hashes of the nodes at the highest height at the bottom.  Since Algorithm~\ref{alg:compact-range} produces nodes in sorted order, and other operations preserve this order, this sorting comes for free.
We now prove that Algorithm~\ref{alg:range-to-root} correctly computes the root of the history tree of size $\ell$.

\lemmarangeroot

\section{Formal Verifiable Log Definitions and a Proof of Theorem~\ref{thm:gossip-safety}}\label{sec:safety-proof}

\formallogdefns

With these definitions in place, we now provide a proof of Theorem~\ref{thm:gossip-safety}, which says that our gossip protocol prevents split-view attacks.

\safetyproof

\section{Parameters Defining Our Settings}\label{sec:params}

We summarize the parameters that comprise the three deployment scenarios we consider in Table~\ref{tab:params}.

\begin{table}
\centering
{\small
\begin{tabular}{lcccc}
\toprule
Parameter & Symbol & Aggressive & KT & CT \\
\midrule
\# servers & $\numsub{S}$ & 1000 & 100 & 100 \\
\# witnesses & $\numwitnesses$ & 25 & 97 & 25 \\
\# clients & $\numsub{C}$ & 1000 & 1000 & 10 \\
Minimum uptime & $U$ & 0.9 & 0.85 & 0.99 \\
Fraction adversarial & $V$ & 4 & 8 &  8 \\
Witness threshold & $Q$ & 16 & 55 & 15 \\
Checkpointing rate & $r_\mrl$ & 1/\si{\second} & 1/\si{\second} & 1/day \\
Rate of user updates & $r_\key$ & 1/\si{\hour} & 1/mo & 1/\si{\hour} \\
Size of MRL & $N_\mrl$ & $2^{26}$ & $2^{26}$ & $2^{10}$\\
Size of registry & $N$  & $2^{30}$ & $2^{30}$ & $2^{30}$\\
Size of leaf log & $N_\key$ & $2^{15}$ & $2^5$ & $2^{15}$ \\
\bottomrule
\end{tabular}
}
\caption{The parameters that define each of the three settings for our evaluations in Section~\ref{sec:gossip-performance} and~\ref{sec:registry-efficiency}.  The fraction of adversarial witnesses is such that $\numwitnesses = VF+1$ and the threshold $Q$ is determined by the formula in Lemma~\ref{lem:bundle-size}.}
\label{tab:params}
\end{table}

\section{Formal Specification of \mogname}
\label{sec:formal-registry-algs}

We formally specify the algorithms that comprise \mogname in Figure~\ref{fig:algs}.

\registryalgs

\section{Formal Verifiable Registry Definitions and a Proof of Theorem~\ref{thm:oscillation}}
\label{sec:registry-proof}

\formalregistrydefns

With these definitions in place, we now provide a proof of Theorem~\ref{thm:oscillation}, which says that \mogname resists oscillation attacks.

\begin{proof}
Let $\A$ be an adversary playing game $\oscillation{\A}{\secp}$.  We use $E_\A$ to denote the event in which $\A$ wins the game, which happens with probability $\pr[\oscillation{\A}{\secp}]$, and build PT adversaries $\{\B_i\}_{i=1}^5$ such that
{\scriptsize
\begin{align*}
\pr[E_\A\land \lnot E_\ks] &\leq \advks{\B_1}{\secp} \\
\pr[E_\A\land E_\ks\land E_\leaflog] &\leq \advcr{\B_2}{\secp} \\
\pr[E_\A\land E_\ks \land \lnot E_\leaflog\land E_\mmap] &\leq \advlookup{\B_3}{\secp} \\
\pr[E_\A\land E_\ks \land \lnot E_\leaflog\land \lnot E_\mmap\land E_\mrl] &\leq \advmemb{\B_4}{\secp} \\
\pr[E_\A\land E_\ks\land \lnot E_\leaflog \land \lnot E_\mmap \land \lnot E_\mrl] &\leq \advsplitview{\B_5}{\secp} 
\end{align*}
}
for independent events $E_\ks$, $E_\leaflog$, $E_\mmap$, and $E_\mrl$ that we define below.  We then have that 
{\scriptsize
\begin{align*}
\advoscillation{\A}{\secp} &= \pr[E_\A] \\ 
&= \pr[E_\A](\pr[\lnot E_\ks] + \pr[E_\ks]) \\
&= \advks{\B_1}{\secp} + \pr[E_\A\land E_\ks] \\
&= \advks{\B_1}{\secp} + \pr[E_\A\land E_\ks](\pr[E_\leaflog] + \pr[\lnot E_\leaflog]) \\
&= \advks{\B_1}{\secp} + \pr[E_\A\land E_\ks \land E_\leaflog]~+ \\
&\quad~~\pr[E_\A \land E_\ks\land \lnot E_\leaflog]\\
&= \advks{\B_1}{\secp} + \advcr{\B_2}{\secp}~+ \\
&\quad~~\pr[E_\A \land E_\ks\land\lnot E_\leaflog](\pr[E_\mmap + \pr[\lnot E_\mmap]) \\
&= \advks{\B_1}{\secp} + \advcr{\B_2}{\secp}~+ \\
&\quad~~\pr[E_\A\land E_\ks\land \lnot E_\leaflog\land E_\mmap]~+ \\
&\quad~~\pr[E_\A\land E_\ks\land \lnot E_\leaflog \land \lnot E_\mmap] \\
&= \advks{\B_1}{\secp} + \advcr{\B_2}{\secp} + \advlookup{\B_3}{\secp}~+ \\
&\quad~~\pr[E_\A\land E_\ks\land \lnot E_\leaflog \land \lnot E_\mmap](\pr[E_\mrl] + \pr[\lnot E_\mrl]) \\
&= \advks{\B_1}{\secp} + \advcr{\B_2}{\secp} + \advlookup{\B_3}{\secp}~+ \\
&\quad~~\pr[E_\A\land E_\ks\land \lnot E_\leaflog \land \lnot E_\mmap \land E_\mrl]~+ \\
&\quad~~\pr[E_\A\land E_\ks\land \lnot E_\leaflog \land \lnot E_\mmap \land \lnot E_\mrl] \\
&= \advks{\B_1}{\secp} + \advcr{\B_2}{\secp} + \advlookup{\B_3}{\secp}~+ \\
&\quad~~\advmemb{\B_4}{\secp} + \advsplitview{\B_5}{\secp},
\end{align*} 
}
from which the theorem follows.  We consider the winning conditions for $\oscillation{\A}{\secp}$; i.e., what happens in event $E_\A$.  If $\A$ succeeds, then it outputs
\[
\snap_1,\snap_2,\key,\val,(\snap_0,\hist_\old),\hist,\pi_1,\pi_2
\]
such that 
\begin{enumerate}
\item $\verlookup(\snap_1,\key,\val,\pi_1) = 1$; 

\item $\veraudit(\snap_2,\key,(\snap_0,\hist_\old),\hist,\pi_2) = 1$;

\item $\version(\snap_0) \leq \version(\snap_1) \leq \version(\snap_2)$; 

\item $\snap_0,\snap_1,\snap_2\in B$;

\item $\val\neq\max(\hist_\old)$; and 

\item $\val\notin\hist$.
\end{enumerate}
We can write $\pi_1 = (\pi_{\leaflog,1},\allowbreak \pi_{\mmap,1}, \pi_{\mrl,1})$ and $\pi_2 = (\pi_{\mmap,2}, \pi_{\mrl,2})$.  We can also define $\roothash_{\leaflog,1}\gets \rangerootalg(\mergealg(\pi_{\leaflog,1},H(\val))$ and $\roothash_{\leaflog,2} \gets \append(\roothash_{\leaflog,0},\hist)$, where $\roothash_{\leaflog,0}$ is the hash of the old leaf log root contained in $\histrepr_\old$.  Similarly, we can define $\roothash_{\mmap,1}$ and $\roothash_{\mmap,2}$ as in the fifth line of $\verlookup$ (i.e., derived from the respective inclusion proofs $\pi_{\mmap,1}$ and $\pi_{\mmap,2}$).

We first consider the event $E_\ks$ in which all of the witnesses $w_j$ are valid, meaning for all $w_j\gets\chi(x_j, \pi_{\btwn,j})$ it is the case that $(x_j,w_j)\in R$.  If $\lnot E_\ks$ happens, meaning there exists a $j$ such that $(x_j,w_j)\notin R$, then we can construct $\B_1$ to break knowledge soundness by outputting $(x_j,\pi_{\btwn,j})$.

From now on, we thus consider that $E_\ks$ does happen.  We next consider the event $E_{\leaflog}$ in which $\roothash_{\leaflog,1}$ is consistent with $\roothash_{\leaflog,2}$, meaning there exists some $j$ such that $\vercommit(\roothash_{\leaflog,1},\entries)=1$ for $\entries\gets\hist_\old\|\hist[1:j]$.  The final two conditions of $E_\A$ tell us, however, that $\val\notin \hist$ and $\val\neq\max(\hist_\old)$, meaning it must be that $\val\neq\max(\entries)$.  By definition of $\mergealg$ and $\rangerootalg$ and the fact that $\verlookup(\snap_1,\key,\val,\pi_1)=1$, there is a node in the leaf log that represents two distinct rightmost descendants: $\val$ and $\max(\entries)$.  This further implies that there are values $x$, $x'$, and $y\neq y'$ such that $H(x\|y) = H(x'\| y')$, which means we can construct $\B_2$ that outputs $x\|y$ and $x'\|y'$ to break collision resistance.

From now on, we assume that $E_\leaflog$ does not happen, meaning there is no prefix $\hist_p$ of $\hist$ such that $\vercommit(\roothash_{\leaflog,1},\hist_\old\|\hist_p)=1$.  We now consider the event $E_\mmap$ in which $\roothash_{\mmap,1}$ is in one of the witnesses $w_j$ extracted from the proofs $\pi_{\btwn,j}$; i.e., there exists a $j$ such that $w_j\gets\chi(x_j,\pi_{\btwn,j})$ and $\roothash_{\mmap,1}\in w_j$.  If $w_j$ is a valid witness ($E_\ks$), we can construct $\B_3$ to break lookup security (of the map) by outputting $(\roothash_{\mmap,1},\key,\roothash_{\leaflog,j},\roothash_{\leaflog,1},\pi_{\mmap,i},\pi_{\mmap,1})$.  This tuple satisfies the first winning condition of the lookup security game (applied to the map) under our assumption that $\roothash_{\leaflog,1} \neq \roothash_{\leaflog,j}$ ($\lnot E_\leaflog$).  By the first condition of $E_\A$, we also know that $\verlookup(\snap_1,\key,\val,\pi_1)=1$ and thus that $\verincl(\roothash_{\mmap,1},\roothash_{\leaflog,1},\pi_{\mmap,1})=1$ (which is equivalent to the second winning condition).  Finally, if $(x_j,w_j)\in R$ then we have values $\roothash_{\leaflog,j}$ (in $x_j$), $\roothash_{\mmap,1}$, and $\pi_{\mmap,i}$ (in $w_j$) such that $\verincl(\roothash_{\mmap,1},\roothash_{\leaflog,j},\pi_{\mmap,i})=1$ (which, given how $\verlookup$ is defined for a map, is equivalent to the third winning condition).  

We finally consider the case that $E_\mmap$ does not happen ($\lnot E_\mmap$), meaning $\roothash_{\mmap,1}$ is not included in any witness $w_j$.  There are again two options: first, $\snap_1$ is consistent with $\snap_2$, meaning there is a list $\entries$, stored with some honest mirror, such that $\vercommit(\snap_2,\entries)=1$ and $\vercommit(\snap_1,\entries[1:k])=1$ for some $k$ ($E_\mrl$).  If $\roothash_{\mmap,1}\notin w_j$ for any $j$ ($\lnot E_\mmap$) and $|S_j| = n_j$ for $x_j = (\snap_j,\roothash_{\leaflog,j},n_j)$ and $w_j = (S_j,\pi_{\mrl,j})$ ($E_\ks$), then it must also be the case that $\roothash_{\mmap,1}\notin \entries$, because $\veraudit$ checks that $\ell_\new = \ell_\old + \sum_{j=1}^{n_\key} n_j$; i.e., that the set of map roots contained in the witnesses represents the entire difference (in terms of the new leaves) between the MRL committed to by $\snap_0$ and the one by $\snap_2$.  We thus have that $\roothash_{\mmap,1}\notin\entries$ and can construct $\B_4$ to break membership security (of the MRL) by outputting $(\snap_1,\roothash_{\mmap,1},\entries[1:k],\pi_{\mrl,1})$.

If instead $\snap_1$ is not consistent with $\snap_2$ ($\lnot E_\mrl$), this essentially means the server has carried out a split-view attack.  We can thus construct $\B_5$ to win at $\splitview{\B_5}{\secp}$ by (1) sending $\snap_0$ and then $\snap_2$ to $\oupdate$ for client $i$, along with a valid proof $\pi\gets\proveappend(\mrl,\snap_0,\snap_2)$; (2) sending $\snap_1$ to $\oupdate$ for client $j$; (3) sending $\roothash_{\mmap,1}$ and $\pi_{\mrl,1}$ to $\ocheck$ for client $j$; and (4) outputting $\entries$.  By the first condition of $E_\A$, $\verincl(\snap_1,\roothash_{\mmap,1},\pi_{\mrl,1})=1$ and thus the first winning condition of $\splitview{\B_5}{\secp}$ is satisfied.  By the third and fourth conditions of $E_\A$, the next two winning conditions are satisfied as well.  Finally, if $\roothash_{\mmap,1}\notin\entries$ ($\lnot E_\mmap$) then the final winning condition is also satisfied.
\end{proof}

\end{document}